\documentclass[journal,12pt,draftclsnofoot,onecolumn]{IEEEtran}
\usepackage{amsmath} 
\usepackage{graphicx}
\usepackage{epstopdf}
\usepackage{amssymb}
\usepackage{amsthm}
\usepackage{caption}
\usepackage[font={small}]{caption}
\usepackage{cite}
\usepackage{subfigure}
\usepackage{enumerate}
% If the IEEEtran.cls has not been installed into the LaTeX system files,
% manually specify the path to it:
% \documentclass[conference]{../sty/IEEEtran}
\usepackage{tikz}
\usepackage{subfigure}
\usetikzlibrary{patterns}
\usetikzlibrary{shapes,arrows}
\usepackage{verbatim} 
\usetikzlibrary{positioning}
\usetikzlibrary{shapes.geometric}
\usetikzlibrary{shapes.misc}

\begin{document}
\title{Outage Probability  and Rate for  $\kappa$-$\mu$ Shadowed Fading  in Interference Limited Scenario}
\author{Suman Kumar \hspace*{2.0in} Sheetal Kalyani  \\
\hspace{0in} Dept. of Electrical Engineering, \hspace{1in} Dept. of Electrical Engineering\\
IIT Ropar\hspace{2.2in} IIT Madras\\
{\tt suman@iitrpr.ac.in \hspace{.6in} skalyani@ee.iitm.ac.in}\\
}
\maketitle
\vspace{-0.2in}
\begin{abstract}
The $\kappa$-$\mu$ shadowed fading model is a very general fading model  as it includes both $\kappa$-$\mu$ and $\eta$-$\mu$  as special cases.  In this work, we derive the expression for outage probability when the signal-of-interest (SoI) and interferers both  experience  $\kappa$-$\mu$ shadowed fading  in an interference limited scenario. \textbf{The derived expression is valid for arbitrary SoI parameters, arbitrary $\kappa$ and $\mu$ parameters for all interferers and any value of the parameter $m$ for the interferers excepting the limiting value of $m\rightarrow \infty$. } The expression can be expressed in terms of Pochhammer integral  where the integrands of integral only contains elementary functions.  The outage probability expression is then simplified for various special cases, especially when SoI experiences $\eta$-$\mu$ or $\kappa$-$\mu$ fading.   Further, the  rate expression is derived when the SoI experiences $\kappa$-$\mu$ shadowed fading with  integer values of $\mu$, and interferers experience $\kappa$-$\mu$ shadowed fading with arbitrary parameters. The  rate expression can be expressed in terms of sum of Lauricella's function of the fourth kind.   The utility of our results is demonstrated by  using the derived expression to study and compare FFR and SFR in the presence of $\kappa$-$\mu$ shadowed fading. Extensive simulation results are provided and these further validate our theoretical results.

\end{abstract}
\section{Introduction}
The $\kappa$-$\mu$ fading distribution with two shape parameters, $\kappa$ and $\mu$, and  $\eta$-$\mu$ fading distribution with two shape parameters, $\eta$ and $\mu$  have been proposed to model line-of-sight (LOS) and non line-of-sight (NLOS) propagation effects, respectively in \cite{4231253}. These fading distributions include Nakagami-q (Hoyt), one sided Gaussian, Rayleigh, Nakagami-m, and Rician distribution as special cases.  Recently, in \cite{6594884}, Paris had introduced $\kappa$-$\mu$ shadowed fading  which is a generalization of $\kappa$-$\mu$ fading and Rician shadowed fading\footnote{The Rician shadowed fading has been proposed to model for land mobile satellite (LMS) channels which is a generalization of Rician fading where only the dominant components are subject to shadowing \cite{abdi2003new}. This fading is controlled by two parameters $\kappa$ and $m$, where $m$ is physically related to shadowing.}. In other words, $\kappa$-$\mu$ shadowed fading is a generalization of $\kappa$-$\mu$ fading in which all the dominant components can randomly fluctuate due to shadowing. This fading distribution is controlled by three parameters $\kappa$, $\mu$ and $m$. Note that $\kappa$-$\mu$ shadowed fading has an extra parameter $m$ with respect to the $\kappa$-$\mu$ fading, and an extra parameter $\mu$ with respect to the Rician shadowed fading. 

The $\kappa$-$\mu$ shadowed fading includes the one-side Gaussian, the Rayleigh,  the Nakagami-m, the Rician, the $\kappa$-$\mu$, and Rician shadowed fading distribution as special cases. Very recently, it has been shown in \cite{MorenoPozasLPM15} that the $\eta$-$\mu$ fading distribution is also a special case of $\kappa$-$\mu$ shadowed fading distribution. Hence any analysis for $\kappa$-$\mu$ shadowed fading also holds for $\eta$-$\mu$ and $\kappa$-$\mu$ fading.  The $\kappa$-$\mu$ shadowed fading can be used for modeling and analysis of various wireless communication  systems. For example, it has application to LMS communication \cite{abdi2003new}, underwater communication \cite{ruiz2011ricean}, outdoor device-to-device communication \cite{6953146}, and the access link between an user and its serving relay node \cite{6378481}. Further, since it includes both $\eta$-$\mu$ and $\kappa$-$\mu$ fading \cite{MorenoPozasLPM15}, it is well suited for modeling  LOS and NLOS effects in wireless systems. The $\kappa$-$\mu$ shadowed distribution considers a signal composed of $n$ number of clusters of multipath waves \cite{6594884}. The non-integer $\kappa$-$\mu$ shadowed parameter $\mu$ is the real extension of integer $n$. Note that the  non-integer values of the parameter $\mu$ (i.e., non-integer values of clusters) have been found in practice, and are extensively reported in the literature \cite{996986,4231253} and the references therein. Therefore, it is essential to analyse the system performance in the presence of arbitrary fading.

Considering $\kappa$-$\mu$ and $\eta$-$\mu$  fading distributions, outage  probability has been studied in \cite{6171806, paris2013outage, 6661325, 7130668, 6957529} and references therein.   In particular, the outage probability for $\eta$-$\mu$ faded signal of interest (SoI) and Rayleigh faded interferers has been derived in  \cite{6171806}. The outage probability expression for $\kappa$-$\mu$ or $\eta$-$\mu$  faded signal-of-interest (SoI) with arbitrary parameters is derived in \cite{paris2013outage} where $\eta$-$\mu$ faded co-channel interferences (CCI) with integer value of the fading parameter $\mu$ are considered. In \cite{6661325}, the outage probability is derived  for the following scenarios: $(i)$ when SoI undergoes $\eta$-$\mu$ fading and CCI undergo either $\eta$-$\mu$ or $\kappa$-$\mu$ fading, where the $\kappa$-$\mu$ variates are independent and identically distributed (i.i.d.), and $(ii)$ when SoI undergoes $\kappa$-$\mu$ fading and CCI undergo  $\eta$-$\mu$ fading. However, it has been assumed in \cite{6661325} that the parameter $\mu$ of $\eta$-$\mu$ fading distribution of either the SoI components or CCI components are integers. For $\kappa$-$\mu$ and $\eta$-$\mu$ fading distributions, the rate has been studied in \cite{4400748,5437524,5529757, 6200361, 6056581} and references therein.  However, none of them have  considered co-channel interferers while deriving the rate expression. Expressions for coverage probability and rate are derived in \cite{7130668}, where SoI and interferers experience $\kappa$-$\mu$ and $\eta$-$\mu$ fading, respectively.  An approximate outage probability and channel rate expression are derived in \cite{6502746} and \cite{6957529}  where the user channel experiences $\eta$-$\mu$ fading and $\kappa$-$\mu$ fading, respectively,  and the  interferers  experience $\eta$-$\mu$ fading. To the best of our knowledge, none of the prior works in open literature have derived exact outage probability and rate expressions when both SoI and CCI experience  $\eta$-$\mu$ or both experience $\kappa$-$\mu$ fading  with  arbitrary fading parameters.

Considering $\kappa$-$\mu$ shadowed  fading distributions,  the outage probability and bit error rate expressions  have been derived in \cite{6594884}. The mathematical analysis of the  capacity of the channel assuming $\kappa$-$\mu$ shadowed fading has been presented in \cite{capacity} and \cite{6866213}.  However, none of the previous works on $\kappa$-$\mu$ shadowed fading  have considered co-channel interferers. Approximate expression for outage probability and rate are derived in \cite{7056548} where both  SoI and interferers experience $\kappa$-$\mu$ shadowed fading. To the best of our knowledge, none of the previous works have derived the exact outage probability and rate when  both SoI and interferers experience $\kappa$-$\mu$ shadowed fading.

In this work, we derive the exact outage probability when SoI and interferers both experience $\kappa$-$\mu$ shadowed fading. \textbf{The expression is valid for arbitrary $\kappa$, $\mu$ and $m$ SoI parameters, arbitrary $\kappa$ and $\mu$ parameter for each interferers and for all values of $m$ for the interferers other than $m\rightarrow \infty$ (Note when $m\rightarrow \infty$ for an interferer, the interferer experience $\kappa$-$\mu$ fading).}   The derived expression is given in terms of Pochhammer integral  where the integrands of integral only contains elementary functions. Also, the  expression is given  in terms of sum of Lauricella's function of the fourth kind, which can also be easily evaluated numerically.  Then we simplify the expressions for the following cases: $(a)$ When SoI experience $\kappa$-$\mu$ fading, $(b)$ When SoI and interferers both experience $\eta$-$\mu$  fading with arbitrary parameters, $(c)$ When SoI experience Hoyt fading  and interferers experience $\eta$-$\mu$  fading with arbitrary parameters.  Interestingly, the outage probability expressions when SoI experience Hoyt fading is given in terms of single Lauricella's function of the fourth kind. \textbf{ The outage probability expression when SoI and interferers both experience $\kappa$-$\mu$ shadowed fading can not be simplified for the case when the interferers  experience  $\kappa$-$\mu$ fading and this is limitation of this expression. However, it has been mentioned in \cite{7870713} that the $\kappa$-$\mu$ shadowed fading model can be used to approximate the $\kappa$-$\mu$ distribution with arbitrary precision, by simply choosing a sufficiently large value of $m$. In other words, one can evaluate the outage probability when the interferers experience $\kappa$-$\mu$ fading by simply choosing a sufficiently large value of $m$ in the derived expression and evaluating it.}

Further, the exact rate expression is derived when SoI experience $\kappa$-$\mu$ shadowed fading with integer $\mu$  and interferers experience $\kappa$-$\mu$ shadowed fading with arbitrary parameters. \textbf{Again, the rate expression is valid for arbitrary $\kappa$ and $m$ SoI parameters, arbitrary $\kappa$ and $\mu$ parameter for each interferers and for all values of $m$ for the interferers other than $m\rightarrow \infty$.}  The rate expression is given when SoI experience $\eta$-$\mu$  fading with integer $\mu$  and interferers experience $\eta$-$\mu$  fading with arbitrary parameters.  Again, the rate expression can be expressed in terms of sum of Lauricella's function of the fourth kind. Simulation results are  provided and these match with the derived results. The impact of the $\kappa$-$\mu$ shadowed fading parameters namely $\kappa$, $\mu$ and $m$ on both outage probability and rate is also studied. Finally, we show the utility of our results by using them for comparison of fractional frequency reuse (FFR) and soft frequency reuse (SFR) in the presence of generalized fading.

FFR and SFR are the two  popular inter-cell interference coordination (ICIC) schemes. Comparatively, FFR is a simpler scheme but, SFR may be more bandwidth efficient. The performance of FFR and SFR are extensively studied and compared in \cite{6047548, 6898829, 6952716, 7015633,7137606} and references therein. In particular,  both  schemes are compared in \cite{6898829} under fully loaded and partial loaded system while considering Nakagami-m fading. It has been shown in \cite{6898829} that in a partial loaded system, SFR outperforms FFR, whereas in a fully loaded system opposite is true. Considering Rayleigh fading, both  schemes are compared in \cite{6047548,7015633,7137606}. To the best of our knowledge, none of the previous work has compared the performance of FFR and SFR schemes in the presence of generalized fading, such as $\kappa$-$\mu$, $\eta$-$\mu$ and $\kappa$-$\mu$ shadowed fading. In this work, we compare both schemes in the presence of  generalized fading, i.e., $\kappa$-$\mu$ shadowed fading as it includes both $\kappa$-$\mu$ and $\eta$-$\mu$ fading. We show that in a fully loaded system, FFR outperforms SFR in the presence of generalized fading. 

In this work, $O_p$ denotes the outage probability,  $\Phi_2^{(N)}(.)$ is confluent multivariate hypergeometric function and $F_D^{(N)}(.) $  denotes Lauricella function of fourth kind \cite{exton1976multiple}. ${}_{(1)}^{(1)}E_D^{(N)}(.)$ is closely related to $F_D^{(N)}(.)$ \cite{exton1976multiple} and $C_N^{(k)}(.)$ is a generalization of the Horn function\cite{exton1976multiple}. 
\section{System Model}
We consider the downlink homogeneous macrocell network with hexagonal structure with radius $R$ as shown in Fig. \ref{fig:hexagonal}. The Signal-to-Interference-Ratio (SIR) is expressed as
\begin{equation}
\text{SIR}=\frac{g'r^{-\alpha}}{\sum\limits_{i=1}^{N}h'_id_i^{-\alpha}}=\frac{g}{I}, \text{ where } I=\sum\limits_{i=1}^{N}h_i. \label{macro_sir}
\end{equation}
Here $g'$ and $h'_i$ are the small scale fading experienced by SoI and the $i^{th}$ interferer, respectively. A standard path loss model $r^{-\alpha}$ is considered, where $\alpha\geq 2$ is the path loss exponent.  The distance from user to serving base station (BS) and $i^{th}$ interferer are denoted by $r$ and $d_i$, respectively. An interference limited network is considered and hence noise is neglected\footnote{Note that considering  noise will lead in extremely complicated expression for outage probability which could be very hard to simplify for the arbitrary parameter case. Hence, in this work, our focus  is on SIR instead of SINR since it leads to both novel and mathematically tractable expression. Further, we have conducted extensive simulations and they indicate that ignoring noise does not significantly affect the numerical results. }.   The number of interferers is denoted  by $N$. The fading gain $g'$ is $\kappa$-$\mu$ shadowed distributed with mean $\bar{\gamma}'$, i.e., $\bar{\gamma}'=E[g']$  and shape parameters $\kappa$, $\mu$, and $m$. Note that the $\kappa$-$\mu$ shadowed fading considers a signal composed of clusters of multipath waves. Within each cluster, the phases of the scattered waves are random and have similar delay times. However, the inter-cluster delay-time spreads are assumed relatively large. Moreover, within each cluster a dominant component exists, which can randomly fluctuate because of shadowing. On the other hand, in $\kappa$-$\mu$ fading, a deterministic dominant component exists within each cluster. Note $\mu$ is the real extension of the number of clusters, $\kappa$ is the ratio between the total power of the dominant components and the total power of the scattered waves. The shape parameter $m$ is related to shadowing component. Also,  $h'_i$s are $\kappa$-$\mu$ shadowed distributed  with mean $\bar{\gamma}'_i$, i.e., $\bar{\gamma}'_i=E[h'_i]$  and shape parameters $\kappa_i$, $\mu_i$, $m_i$. Here a random variable $x$ with mean $\bar{x}$ and shaping parameters $\kappa$, $\mu$ and $m$ is  symbolically expressed as $x\sim S_{\kappa\mu}(\kappa,\mu,m,\bar{x})$. Thus, $g'\sim S_{\kappa\mu}(\kappa,\mu,m,\bar{\gamma}')$ and $h'_i\sim S_{\kappa\mu}(\kappa_i,\mu_i,m_i,\bar{\gamma}'_i)$. Note that $g=g'r^{-\alpha}$ and $h_i=h'_id_i^{-\alpha}$ and hence  $g\sim S_{\kappa\mu}(\kappa,\mu,m,\bar{\gamma})$ and  $h_i\sim S_{\kappa\mu}(\kappa_i,\mu_i,m_i,\bar{\gamma}_i)$ where $\bar{\gamma}=\bar{\gamma}'r^{-\alpha}$, $\bar{\gamma}_i=\bar{\gamma}'_id_i^{-\alpha}$. Although, we have considered the homogeneous macrocell network with hexagonal structure, the derivation of the paper is valid for any scenario where the distance from the serving BS and the distances from the interferers are given.
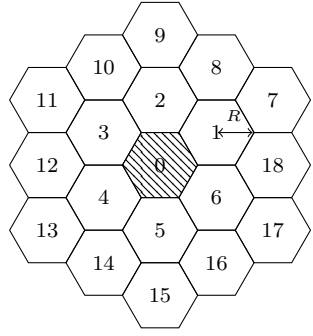
\begin{figure}[ht]
\centering
\begin{tikzpicture}
%\draw (0,0) grid (10,10);
\node[pattern=north west lines, regular polygon, regular polygon sides=6,minimum width= 1  cm, draw] at (5,5) {};
\node at (5,5){\scriptsize $0$};
\node[regular polygon, regular polygon sides=6,minimum width=1 cm, draw] at (5+1.5*0.5,5-0.866*0.5) {};
\node at (5+1.5*0.5,5-0.866*0.5){\scriptsize $6$};
\node[ regular polygon, regular polygon sides=6,minimum width=1 cm, draw] at (5+1.5*0.5,5+0.866*0.5) {};
\node at (5+1.5*0.5,5+0.866*0.5){\scriptsize $1$};
\node[ regular polygon, regular polygon sides=6,minimum width=1 cm, draw] at (5-1.5*0.5,5+0.866*0.5) {};
\node at (5-1.5*0.5,5+0.866*0.5){\scriptsize $3$};
\node[ regular polygon, regular polygon sides=6,minimum width=1 cm, draw] at (5-1.5*0.5,5-0.866*0.5) {};
\node at (5-1.5*0.5,5-0.866*0.5){\scriptsize $4$};
\node[ regular polygon, regular polygon sides=6,minimum width=1 cm, draw] at (5,5-0.866) {};
\node at (5,5-0.866){\scriptsize $5$};
\node[ regular polygon, regular polygon sides=6,minimum width=1 cm, draw] at (5,5+0.866) {};
\node at (5,5+0.866){\scriptsize $2$};
\node[ regular polygon, regular polygon sides=6,minimum width=1 cm, draw] at (5,5+0.866*2) {};
\node at (5,5+0.866*2){\scriptsize $9$};
\node[ regular polygon, regular polygon sides=6,minimum width=1 cm, draw] at (5,5-0.866*2) {};
\node at (5,5-0.866*2){\scriptsize $15$};
\node[ regular polygon, regular polygon sides=6,minimum width=1 cm, draw] at (5+3*0.5,5+0.866*1) {};
\node at (5+3*0.5,5+0.866*1){\scriptsize $7$};
\node[ regular polygon, regular polygon sides=6,minimum width=1 cm, draw] at (5+3*0.5,5-0.866*1) {};
\node at (5+3*0.5,5-0.866*1){\scriptsize $17$};
\node[ regular polygon, regular polygon sides=6,minimum width=1 cm, draw] at (5-3*0.5,5-0.866*1) {};
\node at (5-3*0.5,5-0.866*1){\scriptsize $13$};
\node[ regular polygon, regular polygon sides=6,minimum width=1 cm, draw] at (5-3*0.5,5+0.866*1) {};
\node at (5-3*0.5,5+0.866*1){\scriptsize $11$};
\node[ regular polygon, regular polygon sides=6,minimum width=1 cm, draw] at (5+1.5,5) {};
\node at (5+1.5,5){\scriptsize $18$};
\node[ regular polygon, regular polygon sides=6,minimum width=1 cm, draw] at (5-1.5,5) {};
\node at (5-1.5,5){\scriptsize $12$};
\node[ regular polygon, regular polygon sides=6,minimum width=1 cm, draw] at (5+1.5*0.5,5-0.866*1.5) {};
\node at(5+1.5*0.5,5-0.866*1.5){\scriptsize $16$};
\node[ regular polygon, regular polygon sides=6,minimum width=1 cm, draw] at (5-1.5*0.5,5+0.866*1.5) {};
\node at(5-1.5*0.5,5+0.866*1.5){\scriptsize $10$};
\node[ regular polygon, regular polygon sides=6,minimum width=1 cm, draw] at (5-1.5*0.5,5-0.866*1.5) {};
\node at(5-1.5*0.5,5-0.866*1.5){\scriptsize $14$};
\node[ regular polygon, regular polygon sides=6,minimum width=1 cm, draw] at (5+1.5*0.5,5+0.866*1.5) {};
\node at(5+1.5*0.5,5+0.866*1.5){\scriptsize $8$};
\draw[<->] (5+1.5*0.5+0.02,5+0.866*0.5)--(5+1.5*0.5-0.02+0.5,5+0.866*0.5) node[pos=0.45,sloped,above] {\tiny$R$};
\end{tikzpicture}
            \caption{ Macrocell network with hexagonal tessellation having radius $R$.}
           
             \label{fig:hexagonal}
        \end{figure}

The probability density function (pdf) of $\kappa$-$\mu$ shadowed distribution is given by \cite{6594884}
\begin{equation}
\textstyle
f_X(x)=\frac{\mu^\mu m^m (1+\kappa)^\mu x^{\mu-1}}{\Gamma(\mu)(\bar{\gamma})^\mu(\mu\kappa+m)^m}e^{-\frac{\mu(1+\kappa)x}{\bar{\gamma}}}  {}_1 F_1 \left(m,\mu,\frac{\mu^2\kappa(1+\kappa)}{\mu\kappa+m}\frac{x}{\bar{\gamma}}\right), \label{kappa_mu_shadowed}
\end{equation}
where ${}_1 F_1$ is confluent hypergeometric function \cite{exton1976multiple}. Recall that $\kappa$-$\mu$ shadowed is the generalization of the Rician shadowed fading and $\kappa$-$\mu$ fading. Rician shadowed fading considers only one cluster and within that cluster a dominant component exists which can randomly fluctuate because of shadowing. Therefore, by simply putting $\mu=1$ in \eqref{kappa_mu_shadowed}, one obtains the pdf of Rician shadowed fading given in \cite[Eq. (6)]{abdi2003new}. Similarly, $\kappa$-$\mu$ fading considers each cluster having deterministic dominant component. Therefore, by putting $m \rightarrow \infty$ in \eqref{kappa_mu_shadowed}, which makes the dominant component to be deterministic, one obtains $\kappa$-$\mu$ fading given in\cite[Eq.(2)]{4231253}.   Assuming $\theta=\frac{\bar{\gamma}}{\mu(1+\kappa)}$ and $\lambda=\frac{(\mu\kappa+m)\bar{\gamma}}{\mu(1+\kappa)m}$, the  above pdf can be rewritten as
\begin{equation}
\textstyle
f_X(x)=\frac{x^{\mu-1}} {\theta^{\mu-m}\lambda^m\Gamma(\mu)}e^{-\frac{x}{\theta}}
  {}_1 F_1 \left(m,\mu,\frac{x}{\theta}-\frac{x}{\lambda}\right)
\end{equation}
 The cumulative distribution function (cdf) of $\kappa$-$\mu$ shadowed fading is given by \cite{6594884}
\begin{equation}
\textstyle
F_X(x)=\frac{x^{\mu} } {\theta^{\mu-m}\lambda^m\Gamma(\mu+1)} \, \Phi_2 \left(\mu-m,m,\mu+1,-\frac{x}{\theta},-\frac{x}{\lambda}\right).
\end{equation}
where $\Phi_2(.)$ denotes bivariate confluent hypergeometric function \cite{exton1976multiple}. The pdf of the $\eta$-$\mu$ RV  is given by \cite{4231253}.
\begin{equation}
\textstyle
f_{Y}(y)=\frac{2\sqrt{\pi}\bar{\mu}^{\bar{\mu}+ \frac{1}{2}}h^{\bar{\mu}} y^{\bar{\mu}- \frac{1}{2}}}{\Gamma(\bar{\mu})H^{\bar{\mu}- \frac{1}{2}}}\exp(-2\bar{\mu} hy)I_{\bar{\mu}- \frac{1}{2}}(2\bar{\mu} Hy)\label{eta_mu}
\end{equation}
where $\bar{\mu}=\frac{E^2\{Y\}}{2\text{var}\{Y\}}[1+(\frac{H}{h})^2]$ (with expectation and variance being denoted by $E\{.\}$ and $\text{var}\{.\}$, respectively.), $\Gamma(.)$ denotes the gamma function, and $I_{\bar{\mu}}(.)$ denotes the modified Bessel function of the first kind of the order $\bar{\mu}$. Parameters $H$ and $h$ can both be defined  in two different ways corresponding to two different fading formats depending on the physical meaning of the parameter $\eta$. In format $(i)$, $0<\eta< \infty $ is the power ratio of the in-phase and quadrature 	component of the fading signal in each multipath cluster, and $H$ and $h$ are given by $H=\frac{\eta^{-1}-\eta}{4}, \text{ and  } h=\frac{2+\eta^{-1}+\eta}{4}.$ 
Format $(ii)$ can be obtained from format $(i)$ \cite{4231253}. In order to show the utility of our results, we have used them to compare FFR and SFR performance metrics. Therefore, in next paragraph, we briefly discuss both FFR and SFR.

FFR and SFR both divide the users as cell-centre users and cell-edge users based on a predefined SIR threshold $S_{t}$  as shown in Fig. \ref{fig:sfr}. Users with SIR $\geq S_{t}$ are classified as cell-centre users, while remaining users are assumed to be cell-edge users.  FFR uses frequency reuse $1$ at the cell-centre and frequency reuse $\frac{1}{3}$ at the cell-edge. Whereas, SFR uses frequency reuse $\frac{1}{3}$ at the cell-edge and uses the cell-edge frequency of other cells at the cell-centre. Also, in SFR the transmit power levels for cell-edge users is $\beta$ times higher than the cell-centre users.

       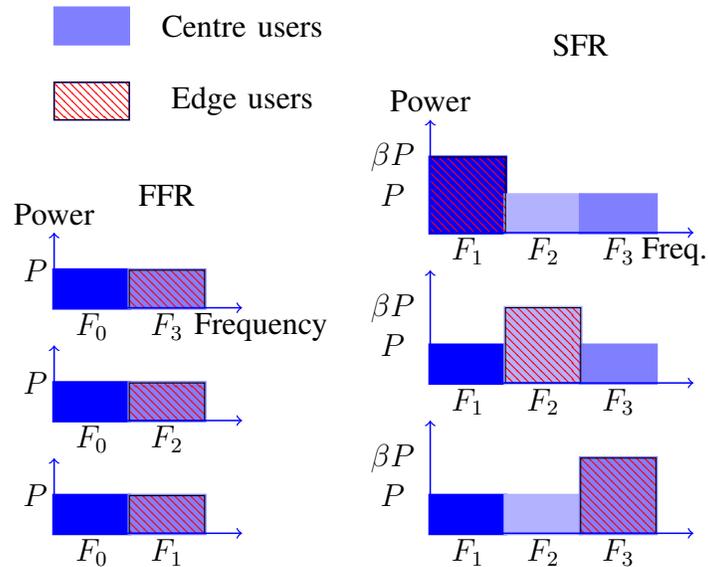
\begin{figure}[ht]
\centering
\begin{tikzpicture}
%\draw (-5,0) grid (10,10);
\node at (1.5,1){$\beta  P $};
\node at (1.5,0.5){$P$};
\node at (2.5,-0.25){$F_1$};
\node at (3.5,-0.25){$F_2$};
\node at (4.5,-0.25){$F_3$};
\draw[fill] [blue!100,line width=.05cm](2,0) rectangle (3,0.5);
\draw[fill] [blue!30,line width=.05cm](3,0) rectangle (4,0.5);
\draw[fill] [blue!50,line width=.05cm](4,0) rectangle (5,1);
\draw[pattern=north west lines, pattern color=red](4,0) rectangle (5,1);
\draw [blue,line width=.02  cm] [->](2,0)--(2,1.5);
\node at (1.5,3){$\beta P$};
\node at (1.5,2.5){$P$};
\node at (2.5,1.75){$F_1$};
\node at (3.5,1.75){$F_2$};
\node at (4.5,1.75){$F_3$};
\draw[fill] [blue!100,line width=.05cm](2,2) rectangle (3,2.5);
\draw[fill] [blue!30,line width=.05cm](3,2) rectangle (4,3);
\draw[fill] [blue!50,line width=.05cm](4,2) rectangle (5,2.5);
\draw[pattern=north west lines, pattern color=red](3,2) rectangle (4,3);
\draw [blue,line width=.02  cm] [->](2,2)--(2,3.5);
\node at (1.5,5){$\beta P$};
\node at (1.5,4.5){$P$};
\node at (2.5,3.75){$F_1$};
\node at (3.5,3.75){$F_2$};
\node at (4.5,3.75){$F_3$};
\draw[fill] [blue!100,line width=.05cm](2,4) rectangle (3,5);
\draw[fill] [blue!30,line width=.05cm](3,4) rectangle (4,4.5);
\draw[fill] [blue!50,line width=.05cm](4,4) rectangle (5,4.5);
\draw[pattern=north west lines, pattern color=red](2,4) rectangle (3,5);
\draw [blue,line width=.02  cm] [->](2,4)--(2,5.5);
\node at (2,5.75){Power};
\draw [blue,line width=.02  cm] [->](2,0)--(5.5,0);
\draw [blue,line width=.02  cm] [->](2,2)--(5.5,2);
\draw [blue,line width=.02  cm] [->](2 ,4)--(5.5,4);
\node at (5.25,3.75){Freq.};
\node at (4,6.5){SFR};
\draw[fill] [blue!50,line width=.02cm](-3,6.5) rectangle (-2,7);
\draw [blue,line width=.02cm](-3,5.5) rectangle (-2,6);
\draw[pattern=north west lines, pattern color=red](-3,5.5) rectangle (-2,6);
\node at (-0.5,6.75){Centre users};
\node at (-0.5,5.75){Edge users};
\node at (1.75-5,0.5){$P$};
\node at (2.5-5,-0.25){$F_0$};
\node at (3.5-5,-0.25){$F_1$};
\draw[fill] [blue!100,line width=.05cm](2-5,0) rectangle (3-5,0.5);
\draw[fill] [blue!50,line width=.05cm](3-5,0) rectangle (4-5,0.5);
\draw[pattern=north west lines, pattern color=red](3-5,0) rectangle (4-5,0.5);
\draw [blue,line width=.02  cm] [->](2-5,0)--(2-5,1);
\node at (1.75-5,2){$P$};
\draw[fill] [blue!100,line width=.05cm](2-5,1.5) rectangle (3-5,2);
\draw[fill] [blue!50,line width=.05cm](3-5,1.5) rectangle (4-5,2);
\draw[pattern=north west lines, pattern color=red](3-5,1.5) rectangle (4-5,2);
\draw [blue,line width=.02  cm] [->](2-5,1.5)--(2-5,2.5);
\node at (1.75-5,3.5){$P$};
\draw[fill] [blue!100,line width=.05cm](2-5,3) rectangle (3-5,3.5);
\draw[fill] [blue!50,line width=.05cm](3-5,3) rectangle (4-5,3.5);
\draw[pattern=north west lines, pattern color=red](3-5,3) rectangle (4-5,3.5);
\draw [blue,line width=.02  cm] [->](2-5,3)--(2-5,4);
\node at (2-5,4.25){Power};
\draw [blue,line width=.02  cm] [->](2-5,0)--(4.5-5,0);
\draw [blue,line width=.02  cm] [->](2-5,1.5)--(4.5-5,1.5);
\draw [blue,line width=.02  cm] [->](2-5,3)--(4.5-5,3);
\node at (5-5.25,2.75){Frequency};
\node at (4-5.5,4.5){FFR};
\node at (2.5-5,1.25){$F_0$};
\node at (3.5-5,1.25){$F_2$};
\node at (2.5-5,2.75){$F_0$};
\node at (3.5-5,2.75){$F_3$};
\end{tikzpicture}
        \caption{Frequency and power allocation in FFR and SFR for three neighbouring cells.}
        \label{fig:sfr}
        \end{figure} 

\section{ Outage Probability and Rate}

In this section, we first derive the general expression of outage  probability when both  user and interferers experience $\kappa$-$\mu$ shadowed fading. Then, we simplify the general expression for various special cases.  The outage probability  is the probability  that a user cannot achieve a target SIR $T$ and it can be written as $P(\text{SIR}<T)$.
\newtheorem{theorem}{Theorem}
\begin{theorem}
The outage probability at a given distance when both user and interferers experience $\kappa$-$\mu$ shadowed fading  with arbitrary parameters is given by
\begin{equation*}
\textstyle
O_p=1-K\bigg(\frac{T\theta_1}{\theta +T\theta_1}\bigg)^{\sum\limits_{i=1}^{N}\mu_i+\mu}
{}_{(1)}^{(1)}E_D^{(2N+1)}\bigg[\sum\limits_{i=1}^{N}\mu_i+\mu, m,\mu_1-m_1,\cdots,  
 \mu_N-m_N,m_1\cdots,
\end{equation*}
\begin{equation}
\textstyle
m_{N-1},1;\mu,1+\sum\limits_{i=1}^{N}\mu_i;\frac{(\lambda-\theta)T\theta_1}{(T\theta_1+\theta)\lambda}, \frac{\theta}{\theta +T\theta_1}, \frac{\theta\theta_2-\theta\theta_1}{\theta_{2}(\theta+T\theta_1)},\cdots,\frac{\theta\lambda_{N}-\theta\theta_1}{\lambda_{N}(\theta+T\theta_1)}\bigg] \label{final1}.
\end{equation}
\end{theorem}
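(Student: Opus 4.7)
The plan is to write $O_p = P(g<TI) = \int_0^\infty F_g(Ty)\,f_I(y)\,dy$ where $I=\sum_{i=1}^N h_i$, and then exploit the fact that the Laplace transforms of $g$ and of each $h_i$ are finite products of rational power factors. To obtain this, I would insert the Euler integral representation of ${}_1F_1$ into the $\kappa$-$\mu$ shadowed pdf \eqref{kappa_mu_shadowed} and interchange integration orders; the Gamma integral then collapses and yields $L_g(s)=(1+s\theta)^{m-\mu}(1+s\lambda)^{-m}$, and similarly $L_{h_i}(s)=(1+s\theta_i)^{m_i-\mu_i}(1+s\lambda_i)^{-m_i}$. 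By independence $L_I(s)=\prod_{i=1}^N L_{h_i}(s)$ is a product of $2N$ rational power factors. The hypothesis that $m_i\not\to\infty$ is precisely what keeps every factor rational: in the limit $m_i\to\infty$ the factor $(1+s\lambda_i)^{-m_i}$ degenerates to an exponential and the algebraic structure used below collapses.

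Next I would substitute the Euler/Pochhammer integral representation of $\Phi_2$ into the CDF, so that $F_g(Ty)$ becomes $(Ty)^\mu/(\theta^{\mu-m}\lambda^m\Gamma(\mu+1))$ times an iterated integral whose integrand is built from elementary factors $(1-\alpha_k t_k)^{-\beta_k}$, together with the Bromwich inversion $f_I(y)=\frac{1}{2\pi i}\int e^{sy}L_I(s)\,ds$. Exchanging all orders of integration (justified inside the common strip of analyticity and, beyond it, by analytic continuation via the Pochhammer double-loop contour) reduces the $y$-integral to the Gamma evaluation
\[
\int_0^\infty y^{\mu+\sum_i\mu_i-1}e^{-y\Lambda}\,dy=\Gamma\!\Big(\mu+\sum_i\mu_i\Big)\,\Lambda^{-\mu-\sum_i\mu_i},
\]
where $\Lambda$ is a linear function of the remaining transform and Euler variables. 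Choosing interferer~$1$ as the reference scale — i.e.\ normalising $\Lambda$ by the factor $(1+s\theta_1)$ before evaluating — extracts exactly the prefactor $\bigl(T\theta_1/(\theta+T\theta_1)\bigr)^{\mu+\sum_i\mu_i}$ that appears in~\eqref{final1}.

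What remains is an iterated integral in $2N+1$ variables, one per rational factor not used as the reference scale: the SoI contributes two variables (tied to $\theta$ and $\lambda$), interferer~$1$ contributes one (tied to $\lambda_1$), and each remaining interferer $j\in\{2,\dots,N\}$ contributes two (tied to $\theta_j$ and $\lambda_j$), giving $2+1+2(N-1)=2N+1$. Reading off the exponents and the affine change of variables, I would identify this iterated integral term-by-term with the canonical Pochhammer representation of Exton's confluent Lauricella function ${}_{(1)}^{(1)}E_D^{(2N+1)}$, with numerator parameters $(m,\mu_1-m_1,\ldots,\mu_N-m_N,m_1,\ldots,m_{N-1},1)$, denominator parameters $(\mu,\,1+\sum_i\mu_i)$, and the $2N+1$ arguments listed in~\eqref{final1}. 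Collecting the constants yields the stated formula.

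The main obstacle is the bookkeeping of the $2N+1$ variable changes that map each factor $(1+s\alpha)^{\beta}$ of $L_g\cdot L_I$ into an argument of ${}_{(1)}^{(1)}E_D^{(2N+1)}$ of the precise form $\theta(\alpha-\theta_1)/(\alpha(\theta+T\theta_1))$. Picking $\theta_1$ as the reference breaks the symmetry among interferers, so one must also check that permuting interferer labels leaves $O_p$ invariant; this follows from the symmetry and Euler-type relations satisfied by the Pochhammer-contour representation of ${}_{(1)}^{(1)}E_D^{(2N+1)}$. Finally, the proof must remain valid for \emph{non-integer} $\mu$ and $\mu_i$, which is exactly why the Pochhammer double-loop contour — rather than the interval $[0,1]$ — is indispensable: it provides the analytic continuation beyond the region where the naive multiple series for $E_D^{(2N+1)}$ converges.
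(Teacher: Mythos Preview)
Your outline is a genuine alternative to the paper's argument, but it is considerably more elaborate than what the paper actually does, and the bookkeeping you sketch does not quite land on the stated form of~\eqref{final1}.

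The paper conditions the \emph{other} way: it writes $O_p=1-E_g[F_I(g/T)]$, pulls in the closed form $F_I(y)=C\,y^{\sum\mu_i}\,\Phi_2^{(2N)}(\mu_1-m_1,\dots,m_N;1+\sum\mu_i;-y/\theta_1,\dots,-y/\lambda_N)$ from Paris~\cite{6594884}, and multiplies by the SoI pdf, whose ${}_1F_1(m,\mu,\cdot)$ is itself a $\Phi_2^{(1)}$. After the substitution $t=g/\theta$ the whole integrand is $e^{-t}t^{a-1}\Phi_2^{(1)}\Phi_2^{(2N)}$, and a single known identity from Exton,
\[
\int_0^\infty e^{-t}t^{a-1}\Phi_2^{(k)}[\cdot;c;x_1t,\dots]\,\Phi_2^{(N-k)}[\cdot;c';x_{k+1}t,\dots]\,dt=\Gamma(a)\,{}_{(1)}^{(k)}E_D^{(N)}[\cdot],
\]
gives ${}_{(1)}^{(1)}E_D^{(2N+1)}$ in one stroke. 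A final Euler-type transformation of $E_D$ is applied only to move the arguments into a convergent region. No Laplace transforms, no Bromwich contour, no Pochhammer loop are used here; the Pochhammer representation appears only later (Appendix~B) as an \emph{evaluation} device, not as part of the proof of Theorem~1.

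Your route --- condition on $I$, write $F_g$ via $\Phi_2^{(2)}$, recover $f_I$ by Bromwich inversion of $\prod_i L_{h_i}$ --- can in principle be pushed through, but two points need repair. First, the parameter count: in the form~\eqref{final1} the SoI supplies exactly \emph{one} numerator parameter ($m$) and \emph{one} argument, matching the ${}_1F_1(m,\mu,\cdot)$ in the SoI \emph{pdf}; your decomposition via the SoI \emph{CDF} (a $\Phi_2$ with two arguments and denominator $\mu+1$) would naturally produce denominator parameters $(\mu+1,\,\sum\mu_i)$ rather than the $(\mu,\,1+\sum\mu_i)$ that appear in~\eqref{final1}, so you would still owe a nontrivial contiguous/transformation identity to reach the stated expression. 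Second, the ``$2+1+2(N-1)$'' accounting singles out $\lambda_1$ for interferer~1, but in~\eqref{final1} it is $\theta_1$ that is taken as the reference scale (the last displayed argument runs through $\lambda_N$, and the special final numerator parameter ``$1$'' arises from the Euler transformation $b_N\mapsto c'-\sum_{j\ge2}b_j$, not from dropping an interferer factor). These are fixable, but they show that the Laplace/Bromwich route buys generality at the cost of exactly the delicate bookkeeping you flag, whereas the paper's $\Phi_2\times\Phi_2\to E_D$ identity sidesteps it entirely.
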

\begin{proof}
See Appendix \ref{outage_prob} for the proof.
\end{proof}
 Note that the outage probability expression is given in terms of the function $E_D^{(2N+1)}(.)$. The function ${}_{(1)}^{(1)}E_D^{(N)}(.)$ is closely related to Lauricella function of fourth kind $F_D^{(N)}(.) $\cite{exton1976multiple} and the series expression for $E_D^{(2N+1)}(.)$ is given in \eqref{eq:lauricella3}.  However, there is no code (either Matlab or Mathematica) available for ${}_{(1)}^{(1)}E_D^{(N)} (.)$. Also there is no single integral expression is available for ${}_{(1)}^{(1)}E_D^{(N)} (.)$. Hence in the Appendix \ref{evaluation}, we have simplified the outage probability further such that it can be evaluated easily. The outage probability  in terms of $F_D(.)$ is given by (See Appendix \ref{evaluation} for the proof) 
\begin{equation*}
\textstyle
O_p=K_2\sum\limits_{p=0}^{\infty} \frac{(m)_p\Big(1-\frac{\theta}{\lambda}\Big)^p\Gamma(\sum\limits_{i=1}^{N}\mu_i+p+\mu)}{(\mu)_p p!} F_D^{(2N)}\bigg(1-p-\mu,\mu_1-m_1,\cdots, \mu_N-m_N,m_1\cdots,m_N;
\end{equation*} 
\begin{equation}
\textstyle
 1+\sum\limits_{i=1}^{N}\mu_i; \frac{\theta}{\theta+T\theta_1},\cdots,\frac{\theta}{\theta+T\theta_N}, \frac{\theta}{\theta+T\lambda_1},\cdots, \frac{\theta}{\theta+T\lambda_N}\bigg) \label{kappa_mu4},
\end{equation}
where $K_2=\frac{\left(\prod\limits_{i=1}^{N}\left(\frac{\theta}{\theta+T\theta_i}\right)^{\mu_i-m_i}\left(\frac{\theta}{\theta+T\lambda_i}\right)^{m_i}\right)}{\Gamma\left(1+\sum\limits_{i=1}^{N}\mu_i\right)}\frac{\theta^ m }{\Gamma(\mu)(\lambda)^m}$. 
Note that the above expression is in terms of infinite series.  It is analytically shown in Appendix \ref{truncation}   that the  infinite series can be truncated to a series with a finite number of terms with truncation error being lower than a user specified $\epsilon$ ($\epsilon$ is typically of the order of  $10^{-5}$ or lower). In other words, instead of  evaluating the infinite series it is sufficient to use the first $P$ terms as given in \eqref{kappa_mu5}, where $P$ can be chosen according to the accuracy of results required. It is also shown in Appendix \ref{truncation} that as $\kappa$ or $\mu$ increases the number of terms required, i.e., $P$ increases.  Denoting by $O_{p,P}$, the outage probability when only first $P$ terms are considered is given by 
\begin{equation*}
\textstyle
O_{p,P}=K_2\sum\limits_{p=0}^{P} \frac{(m)_p\Big(1-\frac{\theta}{\lambda}\Big)^p\Gamma(\sum\limits_{i=1}^{N}\mu_i+p+\mu)}{(\mu)_p p!} F_D^{(2N)}\bigg(1-p-\mu,\mu_1-m_1,\cdots,
\end{equation*} 
\begin{equation}
\textstyle
 \mu_N-m_N,m_1\cdots,m_N;1+\sum\limits_{i=1}^{N}\mu_i;\frac{\theta}{\theta+T\theta_1},\cdots,\frac{\theta}{\theta+T\theta_N}, \frac{\theta}{\theta+T\lambda_1},\cdots, \frac{\theta}{\theta+T\lambda_N}\bigg) \label{kappa_mu5}.
\end{equation}
 The expression is valid for general path loss exponent, and hence the derived expression can be used for both NLOS and LOS path loss defined in \cite{7416971}.
Fig. \ref{cov_kappa_terms} and Fig. \ref{cov_mu_terms} plots the outage probability with respect to number of terms $P$ for different value of $\kappa$ and $\mu$, respectively. It can be seen that the outage probability converges fast even for higher value of $\kappa$ and $\mu$. As shown in Appendix \ref{truncation}  it can be also observed that as $\kappa$ or $\mu$ increases the number of terms required, i.e., $P$ increases.  Note that Fig. \ref{cov_kappa_terms} and Fig. \ref{cov_mu_terms} is plotted using \eqref{kappa_mu5}.

 \begin{figure}[ht]
  \centering
 \includegraphics[scale=0.45]{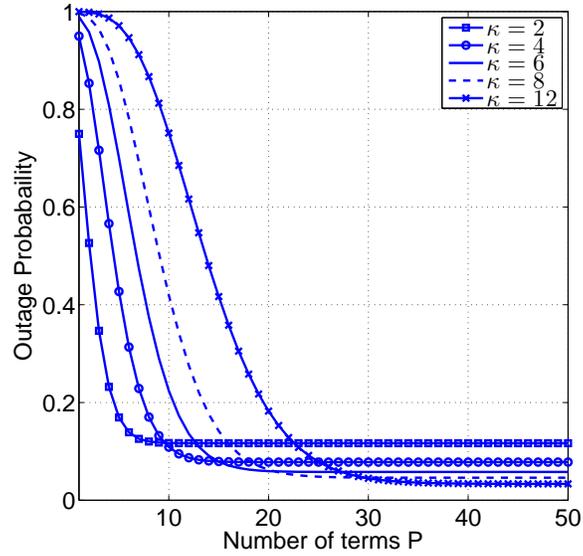}
 \caption{Variation in outage probability with respect to number of terms $P$ for different values of $\kappa$. Here  $r=650$m $\bar{\gamma}=\bar{\gamma}_i=1 \forall i$, $T=0$ dB, $\mu=1.8$  $m=8$,  $\kappa_i=1.5, \mu_i=1.5 \forall i\in \{1,\cdots,6\}$, $\kappa_i=1, \mu_i=1 \forall i\in \{7,\cdots,12\}$, $\kappa_i=0.5, \mu_i=0.5 \forall i\in \{13,\cdots,18\}$ $m_i=10 \forall i$, and $\alpha=4$.}
 \label{cov_kappa_terms}
 \end{figure}

 \begin{figure}[ht]
  \centering
 \includegraphics[scale=0.45]{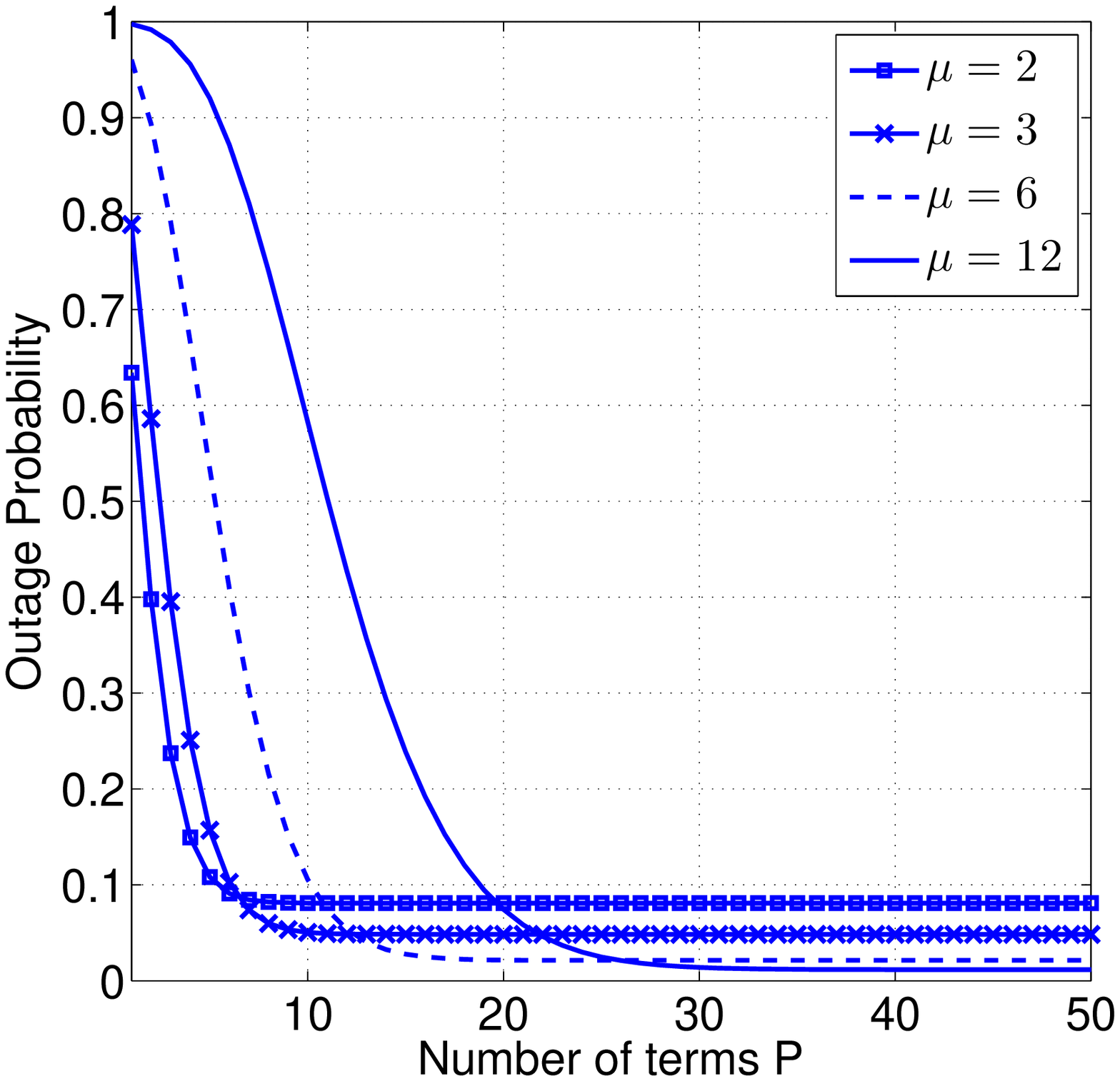}
 \caption{Variation in outage probability with respect to number of terms $P$ for different values of $\mu$. Here $r=650$m $\bar{\gamma}=\bar{\gamma}_i=1 \forall i$, $T=0$ dB, $\kappa=2.5$  $m=8$, $\kappa_i=1.5, \mu_i=1.5 \forall i\in \{1,\cdots,6\}$, $\kappa_i=1, \mu_i=1 \forall i\in \{7,\cdots,12\}$, $\kappa_i=0.5, \mu_i=0.5 \forall i\in \{13,\cdots,18\}$, $m_i=10 \forall i$, and $\alpha=4$.}
 \label{cov_mu_terms}
 \end{figure}

We now discuss how the general expression given in here can be simplified for various cases using  properties of special functions, when the interferers always experience $\kappa$-$\mu$ shadowed fading with arbitrary\footnote{By arbitrary we mean arbitrary $\kappa$ values, arbitrary $\mu$ values and arbitrary $m$ values excepting $m\rightarrow \infty$ for all interferers } parameters

\subsection{Outage probability expression when SoI experiences $\kappa$-$\mu$  fading}
In this subsection, we obtain the outage probability when desired channel experience $\kappa$-$\mu$ fading ($m\rightarrow \infty$) and interferers experience $\kappa$-$\mu$ shadowed fading with arbitrary parameters. In order to obtain the outage probability we need to evaluate $\lim\limits_{m\rightarrow \infty} O_p$, where $O_p$ is given in  \eqref{kappa_mu4}. Using the fact that $\lim\limits_{m\rightarrow \infty} \left(\frac{m}{a+m} \right)^m=\exp(-a)$ and $\lim\limits_{m\rightarrow \infty}(m)_p  \left(\frac{a}{a+m} \right)^p =a^p$, one obtains the outage probability expression when desired channel experience $\kappa$-$\mu$ fading to be
\begin{equation*}
\textstyle
O_p=K_2\sum\limits_{p=0}^{\infty} \frac{(\mu\kappa)^p \Gamma(\sum\limits_{i=1}^{N}\mu_i+p+\mu)}{(\mu)_p p!} F_D^{(2N)}\bigg(1-p-\mu,\mu_1-m_1,\cdots,
\end{equation*} 
\begin{equation}
\textstyle
 \mu_N-m_N,m_1\cdots, m_N;1+\sum\limits_{i=1}^{N}\mu_i;\frac{\theta}{\theta+T\theta_1},\cdots,\frac{\theta}{\theta+T\theta_N}, \frac{\theta}{\theta+T\lambda_1},\cdots, \frac{\theta}{\theta+T\lambda_N}\bigg) \text{d}t\label{kappa_mu40}
\end{equation}
where $K_2=\frac{\left(\prod\limits_{i=1}^{N}\left(\frac{\theta}{\theta+T\theta_i}\right)^{\mu_i-m_i}\left(\frac{\theta}{\theta+T\lambda_i}\right)^{m_i}\right)}{\Gamma\left(1+\sum\limits_{i=1}^{N}\mu_i\right)}\frac{\exp(-\mu\kappa) }{\Gamma(\mu)}$. 
Note that the outage probability expression when SoI experience $\kappa$-$\mu$ fading and interferers experience $\eta$-$\mu$ fading is given in \cite[Eq. (21)]{7130668}. The expression given in \eqref{kappa_mu40} can be reduced to the expression given in \cite[Eq. (21)]{7130668}, by putting $\mu_i=2\bar{\mu}_i, \kappa_i=\frac{1-\eta_i}{2\eta_i}, m=\bar{\mu}_i \forall i$. 
\subsection{Outage probability expression when both SoI and interference  experience $\eta$-$\mu$  fading}
In this subsection, we obtain the outage probability when desired channel and interferers both experience $\eta$-$\mu$ fading. Recently, it has been shown in \cite{MorenoPozasLPM15} that $\eta$-$\mu$ is a special case of $\kappa$-$\mu$ shadowed fading and it can be obtained by putting $\mu=2\bar{\mu}, \kappa=\frac{1-\eta}{2\eta}, m=\bar{\mu}$ (for format $1$). Using the above transformation, one obtains the outage probability expression when desired channel and interferers both experience $\eta$-$\mu$ fading and it is given by
\begin{eqnarray}
\textstyle
O_p=1-\bar{K}\bigg(\frac{T\beta_1}{a_2 +T\beta_1}\bigg)^{\sum\limits_{i=1}^{2N}\bar{\mu}_i+2\bar{\mu}}
{}_{(1)}^{(1)}E_D^{(2N+1)}\bigg[\sum\limits_{i=1}^{2N}\bar{\mu}_i+2\bar{\mu}, \bar{\mu},\bar{\mu}_1, \bar{\mu}_1, \cdots,  
 \bar{\mu}_N, \bar{\mu}_N,1; \nonumber \\
\textstyle 2\bar{\mu},1+\sum\limits_{i=1}^{2N}\bar{\mu}_i;\frac{(a_1-a_2)T\beta_1}{(T\beta_1+a_2)a_1}, \frac{a_2}{a_2 +T\beta_1}, \frac{a_2\beta_2-a_2\beta_1}{\beta_{2}(a_2+T\beta_1)},\cdots,\frac{a_2\beta_{2N}-a_2\beta_1}{\beta_{2N}(a_2+T\beta_1)}\bigg] \label{final1a}.
\end{eqnarray}
where  $\bar{K}=\frac{\Gamma(\sum\limits_{i=1}^{2N}\bar{\mu}_i+2\bar{\mu})\left(\prod\limits_{i=1}^{N}\frac{1}{\beta_{2i}^{\mu_{i}}\beta_{2i-1}^{\mu_i}}\right)}{\Gamma\left(1+\sum\limits_{i=1}^{2N}\bar{\mu}_i\right){T}^{\sum\limits_{i=1}^{2N}\bar{\mu}_i}}\frac{a_2^{\sum\limits_{i=1}^{2N}\bar{\mu}_i+\bar{\mu}}} {a_1^{\bar{\mu}}\Gamma(\bar{\mu})}$
$a_1=\frac{1}{2\bar{\mu}(h+H)}$, $a_2=\frac{1}{2\bar{\mu}(h-H)}$, $\beta_{2i-1}=\frac{1}{2\bar{\mu}_i(h_i+H_i)}$, $\beta_{2i}=\frac{1}{2\bar{\mu}_i(h_i-H_i)}$ $H_i=\frac{\eta_i^{-1}-\eta_i}{4}, \text{ and  } h=\frac{2+\eta_i^{-1}+\eta_i}{4}.$

\subsection{Outage probability when the user channel undergoes Hoyt (Nakagami-q) fading}
When for the user channel $\bar{\mu}=\frac{1}{2}$ (note  Hoyt (or Nakagami-q) distribution is a special case of the $\eta$-$\mu$ distribution  with $\bar{\mu}=\frac{1}{2}$ \cite{4231253}), then the outage probability given in \eqref{final1a}  reduces to
\begin{eqnarray*}
\textstyle
O_p=1-\bar{K}\bigg(\frac{T\beta_1}{a_2 +T\beta_1}\bigg)^{\sum\limits_{i=1}^{2N}\bar{\mu}_i+1}
{}_{(1)}^{(1)}E_D^{(2N+1)}\bigg[\sum\limits_{i=1}^{2N}\bar{\mu}_i+1, \frac{1}{2},
\end{eqnarray*}
\begin{equation}
\textstyle
\bar{\mu}_1, \bar{\mu}_1, \cdots,  
\bar{\mu}_N, \bar{\mu}_N,1;  1,1+\sum\limits_{i=1}^{2N}\bar{\mu}_i; \frac{(a_1-a_2)T\beta_1}{(T\beta_1+a_2)a_1},
 \frac{a_2}{a_2 +T\beta_1},  \frac{a_2\beta_2-a_2\beta_1}{\beta_{2}(a_2+T\beta_1)},\cdots,\frac{a_2\beta_{2N}-a_2\beta_1}{\beta_{2N}(a_2+T\beta_1)}\bigg] \label{final1a1}.
\end{equation}
$\bar{K}=\frac{\Gamma(\sum\limits_{i=1}^{2N}\bar{\mu}_i+1)\left(\prod\limits_{i=1}^{N}\frac{1}{\beta_{2i}^{\mu_{i}}\beta_{2i-1}^{\mu_i}}\right)}{\Gamma\left(1+\sum\limits_{i=1}^{2N}\bar{\mu}_i\right){T}^{\sum\limits_{i=1}^{2N}\bar{\mu}_i}}\frac{a_2^{\sum\limits_{i=1}^{2N}\bar{\mu}_i+\frac{1}{2}}} {a_1^{\frac{1}{2}}\Gamma(\bar{\mu})}$.
In order to simplify \eqref{final1a1}, we use the following transformation formula \cite[P. 287]{exton1976multiple}
\begin{equation*}
\textstyle
{}_{(1)}^{(1)}E_D^{(N)}[a,b_1,\cdots, b_N; c, a ;x_1,\cdots, x_N]=\left[ \prod_{i=2}^{N}(1-x_i)^{-b_i} \right] \times
\end{equation*}
\begin{equation}
\textstyle
F_D^{(N)}\left(b_1,a-b_2-\cdot -b_N,b_2,\cdots, b_N; c ;x_1,\frac{x_1}{1-x_2},\cdots, \frac{x_1}{1-x_N}\right) \label{eq:cov01}
\end{equation}
Now, the outage probability can be written as
\begin{equation*}
\textstyle
O_p=1-\left(\frac{a_2}{a_1}\right)^{\frac{1}{2}}  \prod\limits_{i=1}^{N}\left(\frac{a_2}{T\beta_{2i}+a_2}\right)^{\mu_i}\left(\frac{a_2}{T\beta_{2i-1}+a_2}\right)^{\mu_i}
\end{equation*}
\begin{equation}
\textstyle
 F_D^{(2N+1)}\bigg[ \frac{1}{2},\bar{\mu}_1, \bar{\mu}_1, \cdots,  
 \bar{\mu}_N, \bar{\mu}_N,1;1;\frac{(a_1-a_2)T\beta_1}{(T\beta_1+a_2)a_1},1-\frac{a_2}{a_1},  \frac{(a_1-a_2)T\beta_2}{(T\beta_2+a_2)a_1},\cdots,\frac{(a_1-a_2)T\beta_{2N}}{(T\beta_{2N}+a_2)a_1}\bigg] \label{final1a2}.
\end{equation}
Note that the outage probability expression when user channel experience Hoyt fading  is  in terms of single Lauricella's function of the fourth kind.
\vspace{-0.3in}
\subsection{Rate}
In this subsection, we derive the rate expression for $\kappa$-$\mu$ shadowed faded signals with integer $\mu$ and $\kappa$-$\mu$ shadowed faded interference with arbitrary parameter. 
\begin{theorem}
The rate  when the SoI experiences $\kappa$-$\mu$ shadowed fading with integer values of $\mu$ and  interferers experience $\kappa$-$\mu$ shadowed fading  is given by.
\begin{equation*}
\textstyle
R_{\kappa}=\frac{1}{\Gamma\left(1+\sum\limits_{i=1}^{N}\mu_i\right)}\frac{ \theta^ m }{\Gamma(\mu)(\lambda)^m}\sum\limits_{p=0}^{P} \frac{(m)_p\Big(1-\frac{\theta}{\lambda}\Big)^p\Gamma(\sum\limits_{i=1}^{N}\mu_i+p+\mu)}{(\mu)_p p!}\sum\limits_{i_1\cdots i_{2N}=0}^{p+\mu-1}\frac{(1-p-\mu)_{i_1+\cdots+i_{2N}}(\mu_1-m_1)_{i_1}\cdots(m_N)_{i_{2N}}}{\Big(1+\sum\limits_{i=1}^{N}\mu_i\Big)_{i_1+\cdots+i_{2N}}i_1!\cdots i_{2N}!} 
\end{equation*} 
\begin{equation}
\textstyle
\times \left(\sum\limits_{j=1}^{2N}i_j+\sum\limits_{j=1}^{N}\mu_j\right)^{-1} F_D^{(2N)}\left[1,\mu_1-m_1, \cdots, m_{N};\sum\limits_{i=1}^{N}\mu_i+1;1-\frac{\theta_1}{\theta},\cdots, 1-\frac{\lambda_N}{\theta}\right].\label{capacity20}
\end{equation}
\end{theorem}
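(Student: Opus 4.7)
The plan is to begin from the identity $R=E[\ln(1+\text{SIR})]=\int_0^{\infty}(1-O_p(T))/(1+T)\,dT$, which follows from a single integration by parts on $R=\int_0^{\infty}\ln(1+T)f_{\text{SIR}}(T)\,dT$; the boundary terms vanish because the SIR is almost surely positive and $1-O_p(T)$ decays as $T\to\infty$. Into this representation I would substitute the complementary CDF obtained from the series expression \eqref{kappa_mu5}.

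The central structural observation is that for integer $\mu\geq 1$ the Pochhammer $(1-p-\mu)_{i_1+\cdots+i_{2N}}$ vanishes whenever $i_1+\cdots+i_{2N}\geq p+\mu$, so that the inner $F_D^{(2N)}(1-p-\mu,\ldots)$ collapses to a finite polynomial in its arguments of total degree $p+\mu-1$. Combining this finite expansion with the $T$-dependent prefactor $K_2(T)$ writes the CCDF as a finite double sum over $p$ and $\vec{i}$ of monomials $\prod_{j=1}^{N}(\theta/(\theta+T\theta_j))^{\mu_j-m_j+i_j}(\theta/(\theta+T\lambda_j))^{m_j+i_{N+j}}$ with explicit rational/Gamma coefficients. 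After swapping these (finite) sums with the $T$-integral, the proof reduces to evaluating, for each $\vec{i}$,
\[
I(\vec{i})=\int_0^{\infty}\frac{1}{1+T}\prod_{j=1}^{N}\Bigl(\frac{\theta}{\theta+T\theta_j}\Bigr)^{\mu_j-m_j+i_j}\Bigl(\frac{\theta}{\theta+T\lambda_j}\Bigr)^{m_j+i_{N+j}}\,dT.
\]

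I would evaluate $I(\vec{i})$ by the change of variables $u=1/(1+T)$, which maps the half-line to $[0,1]$ and turns each $\theta/(\theta+T\zeta_j)$ into $\theta u/(\zeta_j+u(\theta-\zeta_j))$. After factoring constants, the integrand reduces to $u^{A-1}\prod_j(1-(1-\theta/\zeta_j)u)^{-a_j}$ with $A=\sum_j a_j=\sum_{j=1}^{N}\mu_j+\sum_{k=1}^{2N}i_k$. The Euler integral representation
\[
F_D^{(N)}(a,b_1,\ldots,b_N;a+1;x_1,\ldots,x_N)=a\int_0^1 t^{a-1}\prod_{j=1}^{N}(1-x_j t)^{-b_j}\,dt
\]
then gives $I(\vec{i})=A^{-1}F_D^{(2N)}(A,a_1,\ldots,a_{2N};A+1;1-\theta/\theta_1,\ldots,1-\theta/\lambda_N)$. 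Applying the standard Euler transformation of Lauricella with $a=A$, $c=A+1$ and arguments $y_j=1-\theta/\zeta_j$ (so that $1-y_j=\theta/\zeta_j$ and $y_j/(y_j-1)=1-\zeta_j/\theta$) rewrites the result in terms of a Lauricella with first parameter $1$ and the target arguments $1-\theta_j/\theta$, $1-\lambda_j/\theta$, while the $A^{-1}$ supplies precisely the $(\sum i_j+\sum\mu_j)^{-1}$ factor seen in \eqref{capacity20}. Finally I would truncate the $p$-sum at $P$ using the error bound of Appendix \ref{truncation}.

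The hard part will be the bookkeeping of the Lauricella parameters after the Euler transformation. The transformed Lauricella naturally carries second-slot parameters $\mu_j-m_j+i_j$ and $m_j+i_{N+j}$ and third-slot parameter $\sum\mu_i+\sum i_k+1$, while \eqref{capacity20} displays the fixed parameters $\mu_j-m_j$, $m_j$ and $\sum\mu_i+1$. Matching the two forms requires a contiguous-relation/reindexing identity that moves the $\vec{i}$-dependent Pochhammer shifts $(\mu_j-m_j)_{i_j}\cdots(m_N)_{i_{2N}}$ out of the hypergeometric parameters and into the outer coefficient, leaving a cleanly $\vec{i}$-independent Lauricella factor outside the sum. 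Its verification amounts to expanding the $\vec{i}$-dependent Lauricella as a multiple power series in the arguments and re-grouping the sums, and it is the one step in the derivation that goes beyond routine algebra. Once that identification is done, collecting the Gamma factors and prefactors produces \eqref{capacity20} verbatim.
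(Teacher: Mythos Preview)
Your derivation is essentially the paper's. The two starting identities are the same integral under $T=e^t-1$ (so $dt=dT/(1+T)$), and your substitution $u=1/(1+T)$ coincides with the paper's $z=e^{-t}$; both then identify the Euler integral for $F_D^{(2N)}$ with $a=A=\sum_j i_j+\sum_j\mu_j$, $c=A+1$, and apply the transformation
\[
F_D^{(N)}[a,b_\ell;c;x_\ell]=\prod_\ell(1-x_\ell)^{-b_\ell}\,F_D^{(N)}\!\Big[c-a,b_\ell;c;\tfrac{x_\ell}{x_\ell-1}\Big]
\]
to send the first parameter to $c-a=1$.

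The ``hard step'' you anticipate is not in the paper's proof. The paper's own derivation (its equations \eqref{rate4} and \eqref{capacity1}) terminates exactly where yours does, with second-slot parameters $\mu_j-m_j+i_j$, $m_j+i_{N+j}$ and lower parameter $\sum_j\mu_j+\sum_k i_k+1$; the displayed formula \eqref{capacity20} simply suppresses the $i_j$-shifts in the Lauricella, which is a typographical slip rather than the result of any further identity. No contiguous relation that strips the $\vec{i}$-dependence out of the hypergeometric is invoked (and indeed none could make the Lauricella itself $\vec{i}$-independent while the prefactor $(\sum i_j+\sum\mu_j)^{-1}$ retains the dependence). Your intermediate expression with the $\vec{i}$-shifted parameters is already the intended final answer, and you should stop there.
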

\begin{proof}
See Appendix \ref{rate} for the proof.
\end{proof}
This is the final expression for rate and it is in terms of  sum of Lauricella's function of the fourth kind. 
Note that the rate expression when user experiences $\kappa$-$\mu$ ($m\rightarrow \infty$) can be obtained by   using the fact that $\lim\limits_{m\rightarrow \infty} \left(\frac{m}{a+m} \right)^m=\exp(-a)$ and $\lim\limits_{m\rightarrow \infty}(m)_p  \left(\frac{a}{a+m} \right)^p =a^p$ and it is given by
\begin{equation*}
\textstyle
R_{\kappa}=\frac{1}{\Gamma\left(1+\sum\limits_{i=1}^{N}\mu_i\right)}\frac{ e^{-\mu\kappa}  }{\Gamma(\mu)}\sum\limits_{p=0}^{P} \frac{(m)_p\Big(\mu\kappa\Big)^p\Gamma(\sum\limits_{i=1}^{N}\mu_i+p+\mu)}{(\mu)_p p!}\sum\limits_{i_1\cdots i_{2N}=0}^{p+\mu-1}\frac{(1-p-\mu)_{i_1+\cdots+i_{2N}}(\mu_1-m_1)_{i_1}\cdots(m_N)_{i_{2N}}}{\Big(1+\sum\limits_{i=1}^{N}\mu_i\Big)_{i_1+\cdots+i_{2N}}i_1!\cdots i_{2N}!} \times
\end{equation*} 
\begin{equation}
\textstyle
\left(\sum\limits_{j=1}^{2N}i_j+\sum\limits_{j=1}^{N}\mu_j\right)^{-1} F_D^{(2N)}\left[1,\mu_1-m_1, \cdots, m_{N};\sum\limits_{i=1}^{N}\mu_i+1;1-\frac{\theta_1}{\theta},\cdots, 1-\frac{\lambda_N}{\theta}\right].\label{capacity21}
\end{equation}
Similarly, when SoI and interferers both experience $\eta$-$\mu$ fading ($\mu=2\bar{\mu}, \kappa=\frac{1-\eta}{2\eta}, m=\bar{\mu}$) can be obtained. 

Note that the outage probability and rate expressions are derived for a given user location and the given interferers location. However, the outage probability and rate  of a typical user can be obtained by averaging over the distance from the desired BS and interferers.  For example, the outage probability and rate of a typical user when the users are uniformly distributed is given by
\begin{equation}
\textstyle
O_p=\int\limits_{r=0}^{R}O_p(r)f_R(r) \text{d}r,  \mbox{ and } R_{\kappa,t}=\int\limits_{r=0}^{R}R_{\kappa}(r)f_R(r)\text{d}r \label{oneone}
\end{equation} 
where $O_p(r)$ and $R_{\kappa}(r)$ are the outage probability and rate at a given distance $r$, respectively. Here $f_R(r)$ is the pdf of $r$ (distance between the user and desired BS) and it is given as $f_R(r)=\frac{2r}{R^2}, r\leqslant R, \text{ and } f_R(r)=0, r>R$

\section{Numerical and simulation Results}
In this section, we plot outage probability and rate using derived expression given in \eqref{kappa_mu5}, \eqref{capacity20}. Note that for analytical results, we have considered the number of terms $P$ to be $50$. We then compare these analytical results with the results obtained using simulations. For the simulation, we have considered a two tier network ($N=18$) with hexagonal structure having centre to edge distance $R=1000$ m as shown in Fig. \ref{fig:hexagonal}.   For simulation, we generate $\kappa$-$\mu$ shadowed RVs corresponding to SoI and independent $\kappa$-$\mu$ shadowed RVs corresponding to $N$ interferers and then compute the SIR using \eqref{macro_sir}.  Then using the simulated SIR we find the outage probability and rate and these are  averaged over $10^5$ realizations.  In all the plots the solid lines correspond to the simulation results, while the markers correspond to analytical results.

 Fig. \ref{cov_fig} depicts the outage probability with respect to distance from the serving BS for different values of $\kappa$, $\mu$ and $\alpha$. Here target SIR $T$ is assumed to be $0$ dB. It can be observed that simulation results  match with analytical results. Fig. \ref{rate_fig} depicts the rate with respect to distance from the serving BS for different values of $\kappa$, $\mu$ and $m$.  Fig. \ref{fig:ffr_sfr100} shows the outage probability when both SoI and CCI experience $\kappa$-$\mu$ fading. Here also target SIR $T$ is assumed to be $0$ dB.   \textbf{It has been mentioned in \cite{7870713} that the $\kappa$-$\mu$ shadowed fading model can be used to approximate the $\kappa$-$\mu$ distribution with arbitrary precision, by simply choosing a sufficiently large value of $m$. In order to plot Fig. \ref{fig:ffr_sfr100}, we have chosen  a sufficiently large value of $m$ in the derived expression when SoI and interferers both experience $\kappa$-$\mu$ shadowed fading.}  It can be seen that simulation results match with the analytical results\footnote{Though the figures just show few sets of results, we have carried out extensive simulations and in all cases the simulation results match with the analytical results}. We have also provided confidence interval (CI) for outage probability for different values of the system parameters in Table \ref{tab:Eavg}. Interestingly, it can be seen that even the $99\%$ confidence interval is fairly narrow when the number of iterations is  $10^5$. The CI values are calculated using the method in \cite{Confidence_compute}.

\begin{table*}[ht]
\centering
\begin{tabular}{|c|c|c|c|c|}
\hline
Parameters & no. of iterations & confidence interval ($95\%$) & confidence interval ($99\%$) & analytical values \\ \hline
$\alpha=3.6$, $r$=$600$ m & $10^2$  & 0.633 to 0.652 & 0.63 to 0.655 & 0.6492 \\ \hline
$\alpha=3.6$, $r$=$600$ m & $10^3$ &  0.646 to 0.652  & 0.645 to 0.652  & 0.6492 \\ \hline
$\alpha=3.6$, $r$=$600$ m & $10^4$ &  0.649 to 0.651 & 0.648 to 0.651 & 0.6492 \\ \hline
$\alpha=3.6$, $r$=$600$ m & $10^5$ &  0.649 to 0.65 & 0.649 to 0.65 & 0.6492 \\ \hline
$\alpha=3$, $r$=$800$ m & $10^2$  & 0.138 to 0.152 & 0.135 to 0.155 & 0.1471 \\ \hline
$\alpha=3$, $r$=$800$ m & $10^3$ &  0.146 to 0.151  & 0.145 to 0.151  & 0.1471 \\ \hline
$\alpha=3$, $r$=$800$ m & $10^4$ &  0.147 to 0.148 & 0.146 to 0.148 & 0.1471\\ \hline
$\alpha=3$, $r$=$800$ m & $10^5$ &  0.147 to 0.147 & 0.147 to 0.148 & 0.1471 \\ \hline
$\alpha=4$, $r$=$500$ m & $10^2$  & 0.868 to 0.881 & 0.866 to 0.884 &0.8783 \\ \hline
$\alpha=4$, $r$=$500$ m & $10^3$ & 0.875 to 0.879 & 0.874 to 0.88  & 0.8783 \\ \hline
$\alpha=4$, $r$=$500$ m & $10^4$ & 0.878 to 0.879 & 0.877 to 0.879 & 0.8783\\ \hline
\end{tabular}
\captionof{table}{Confidence interval for different values of the parameter. Here $\bar{\gamma}=\bar{\gamma}_i=1 \forall i$, $T=3$ dB, $\kappa=1.5, \mu=1.2$ $\kappa_i=1, \mu_i=1,m=m_i=10 \forall i$ and the sample size is assumed to be 100.}
\label{tab:Eavg}
\vspace{-0.4in}
\end{table*}

Now, we see the impact of fading parameters on the outage probability and rate. Fig. \ref{cov_mu} shows the variation in outage probability with respect to the parameter $\mu$ for different values of the shadowing parameter $m$. Observe that as the shadowing parameter $m$ increases, outage probability decreases. The reason of such a behaviour is as follows: the parameter $m$ is related to shadowing, i.e., as $m$ increases shadowing effect decreases.  Recall that $\kappa$-$\mu$ is a special case of $\kappa$-$\mu$ shadowed with $m\rightarrow \infty$ and therefore as the shadowing parameter increases, the outage probability  approaches the outage probability of $\kappa$-$\mu$ fading. It can  also be  observed that as $\mu$ increases, outage probability decreases for every value of $m$.  The reason for such a behavior is as follows:  as the parameter $\mu$ increases, the number of cluster increases and hence the outage probability decreases.

Fig. \ref{cov_m} shows the variation in outage probability with respect to shadowing parameter $m$ for different values of $\kappa$. Observe that there is no change in outage probability when $\kappa=0$ and  variation in outage probability increases as $\kappa$ increases. For example, when $\kappa=1$, the outage probability varies between $0.09389$ and $0.04842$ whereas when $\kappa=3$, the outage probability varies between $0.1784$ and $0.03143$. It can also be observed that when the shadowing parameter $m$ is small, the outage probability increases as the parameter $\kappa$ increases. However, at higher value of $m$, the outage probability decreases as  the parameter $\kappa$ increases.  

 Fig. \ref{rate_fig_m} depicts the variation in rate with respect to shadowing parameter $m$ for different values of $\kappa$. Similar to the outage probability, it can be observed that there is no change in rate when $\kappa=0$ and the variation in rate increases as $\kappa$ increases. For example, when $\kappa=1$, the rate varies between $1.553$ nats/Hz  and $1.693$ nats/Hz whereas  when $\kappa=3$, the rate varies between  $1.428$ nats/Hz and $1.716$ nats/Hz. It can be also observed that  when the shadowing parameter $m$ is small, the rate decreases as the parameter $\kappa$ increases. However, at higher value of $m$, the rate increases as  the parameter $\kappa$ increases.

\begin{figure}[ht]
 \centering
\includegraphics[scale=0.23]{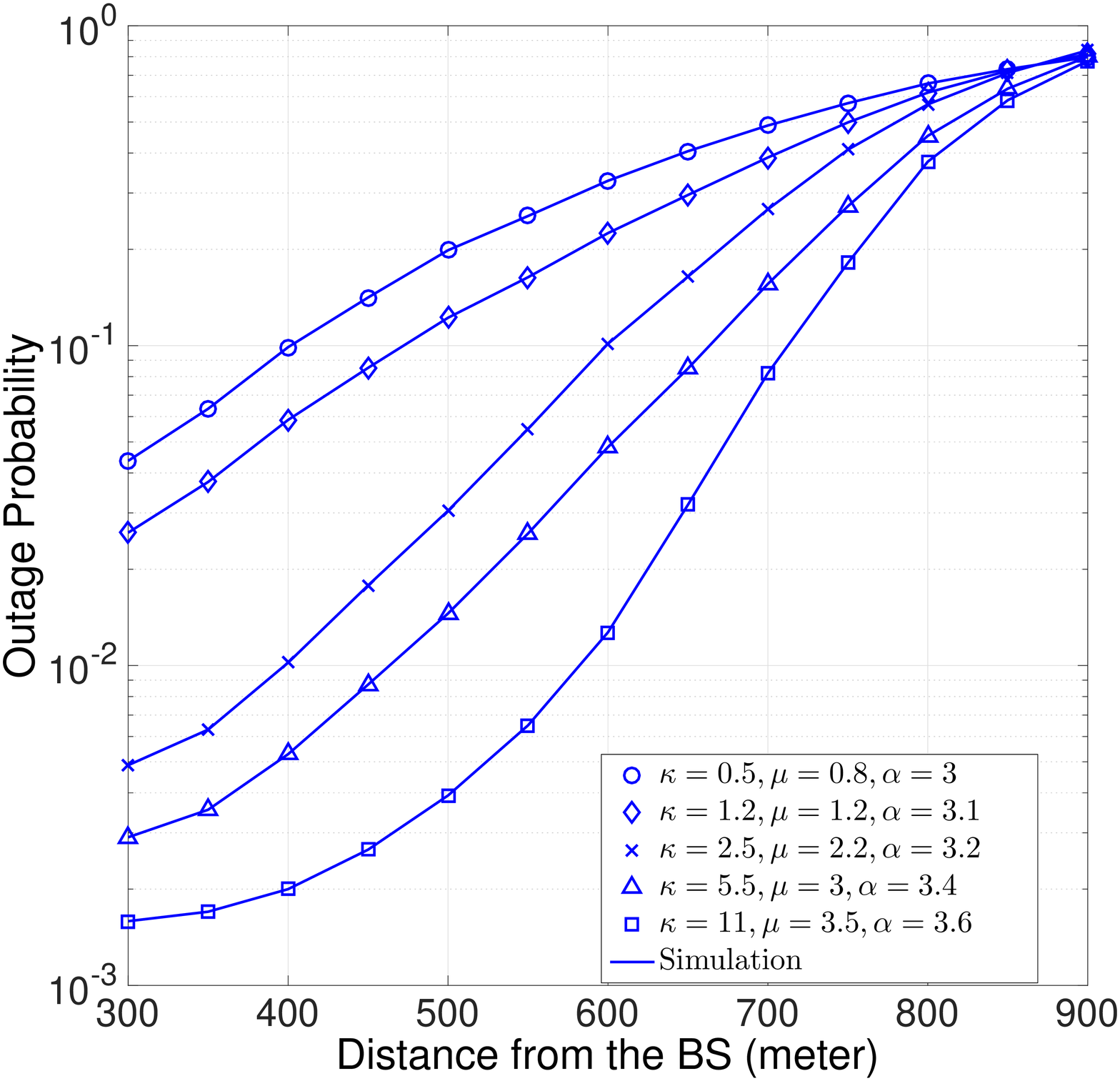}
\caption{Comparison of analytical results with simulation results of outage probability. Here $\bar{\gamma}=\bar{\gamma}_i=1 \forall i$, $T=0$ dB, $\kappa_i=1, \mu_i=1 \forall i\in \{1,\cdots,6\}$, $\kappa_i=0.8, \mu_i=0.8 \forall i\in \{7,\cdots,12\}$, $\kappa_i=0.5, \mu_i=0.5 \forall i\in \{13,\cdots,18\}$, $ m=8, m_i=10 \forall i$. }
\label{cov_fig}
\end{figure}

\begin{figure}[ht]
 \centering
\includegraphics[scale=0.22]{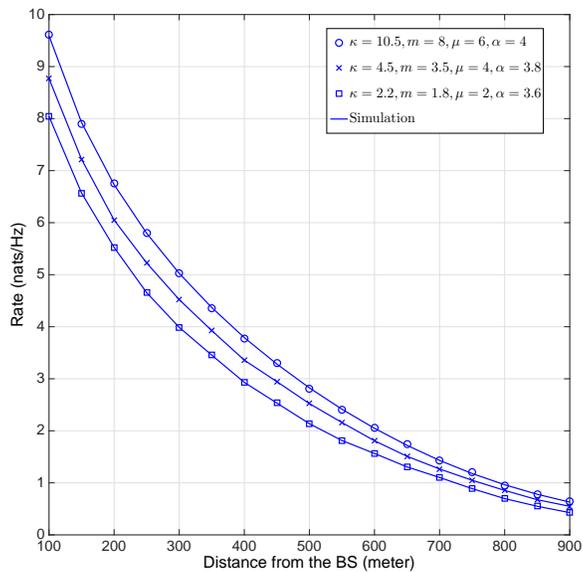}
\caption{Rate with respect to distance from  the BS. Here $\bar{\gamma}=\bar{\gamma}_i=1 \forall i$,  $\kappa_i=1.5, \mu_i=1.5 \forall i\in \{1,\cdots,6\}$, $\kappa_i=1, \mu_i=1 \forall i\in \{7,\cdots,12\}$, $\kappa_i=0.5, \mu_i=0.5 \forall i\in \{13,\cdots,18\}$, $m_i=10 \forall i$. }
\label{rate_fig}
\end{figure}

\begin{figure}[ht]
 \centering
\includegraphics[scale=0.22]{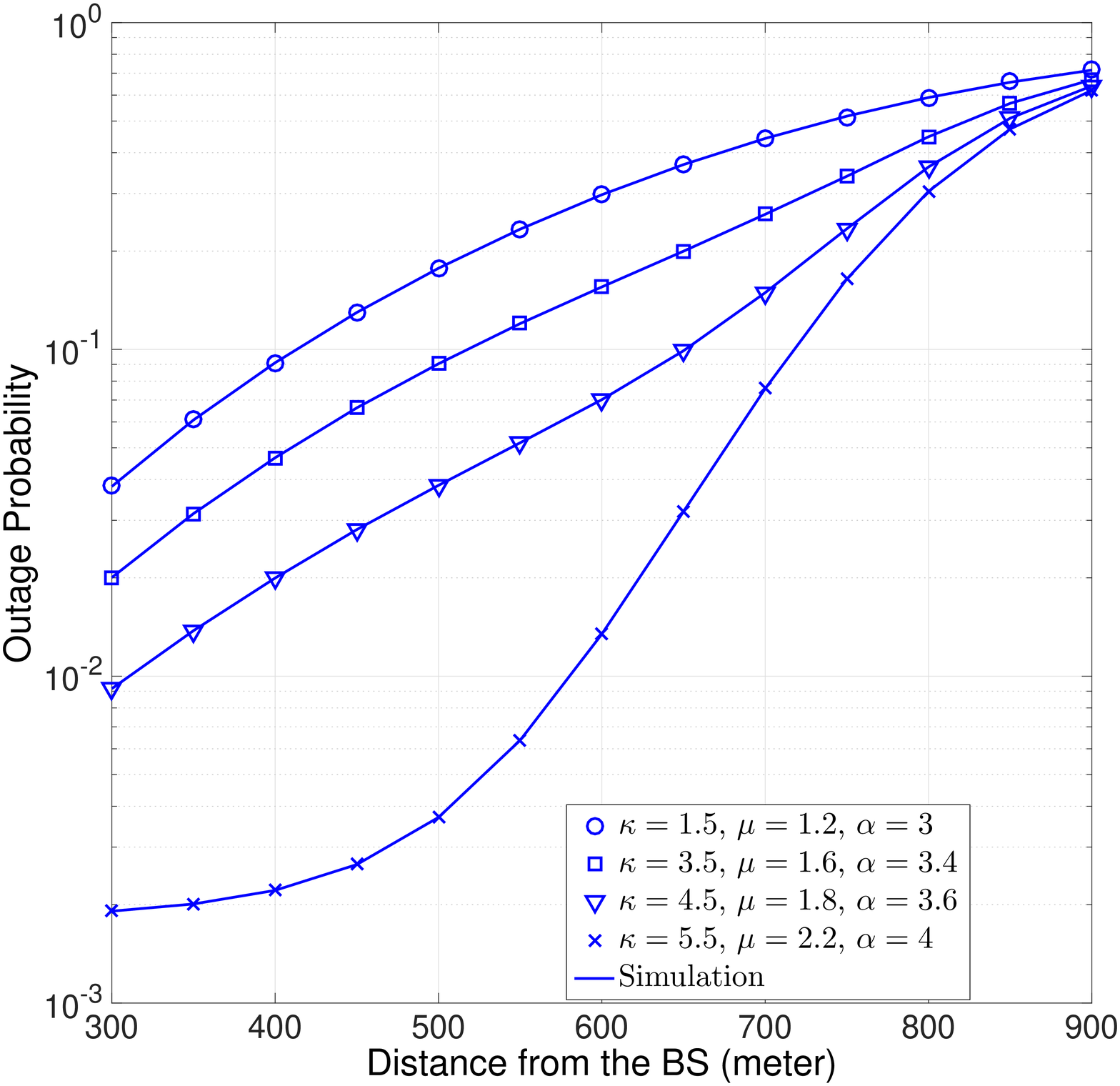}
\caption{Comparison of analytical results with simulation results of outage probability when both SoI and CCI experience $\kappa$-$\mu$ fading. Here $\bar{\gamma}=\bar{\gamma}_i=1 \forall i$, $T=0$ dB, $\kappa_i=1.2, \mu_i=1 \forall i\in \{1,\cdots,6\}$, $\kappa_i=1, \mu_i=0.8 \forall i\in \{7,\cdots,12\}$, $\kappa_i=0.8, \mu_i=0.5 \forall i\in \{13,\cdots,18\}$.}
\label{fig:ffr_sfr100}
\vspace{-0.2in}
\end{figure}

\begin{figure}[ht]
 \centering
\includegraphics[scale=0.23]{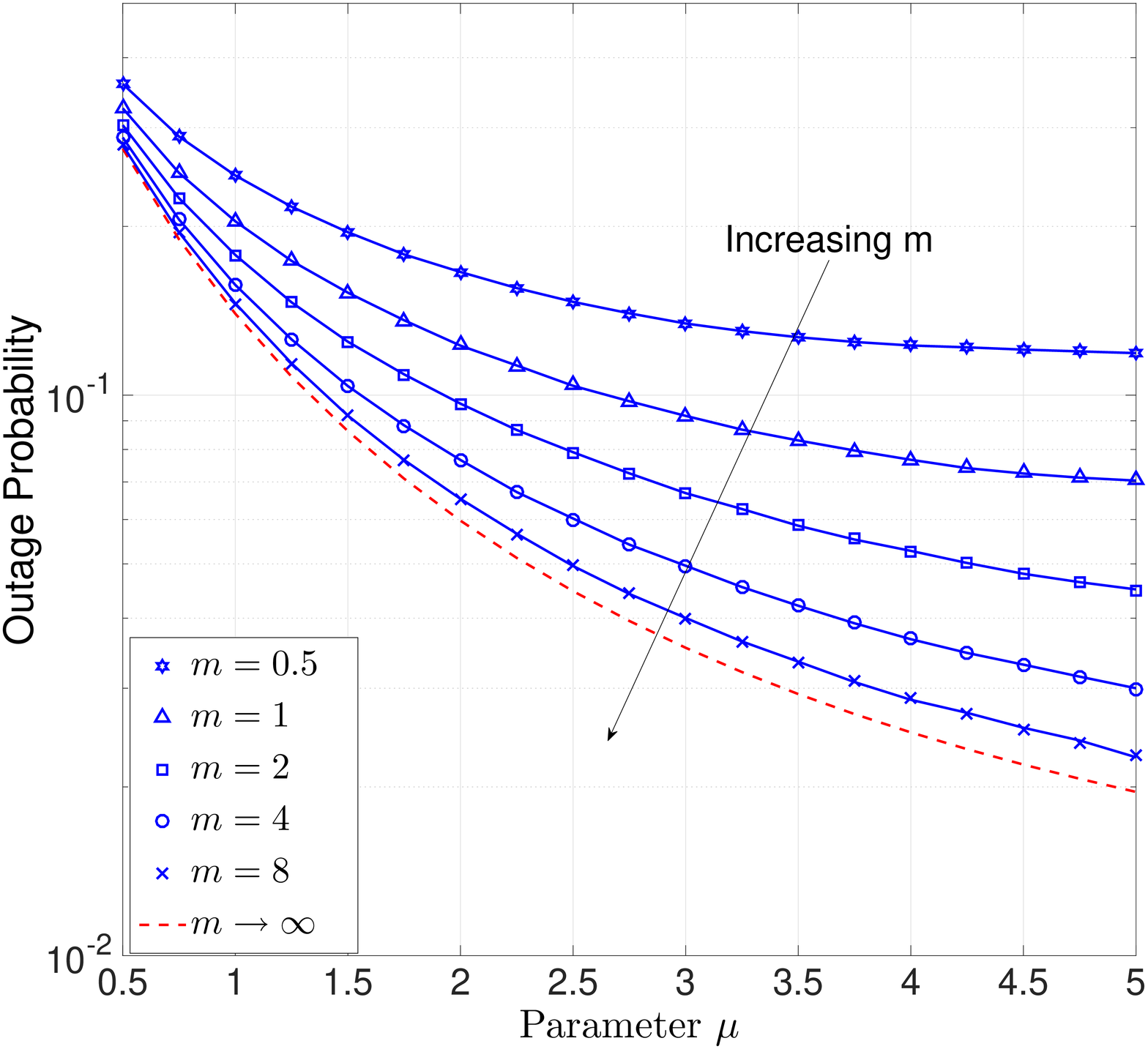}
\caption{Variation in outage probability with respect to  parameter $\mu$. Here $r=650$m $\bar{\gamma}=\bar{\gamma}_i=1 \forall i$, $T=0$dB, $\kappa=2$ $\kappa_i=1, \mu_i=1,m_i=10 \forall i$, and $\alpha=4$.}
\label{cov_mu}\vspace{-0.2in}
\end{figure}

\begin{figure}[ht]
 \centering
\includegraphics[scale=0.23]{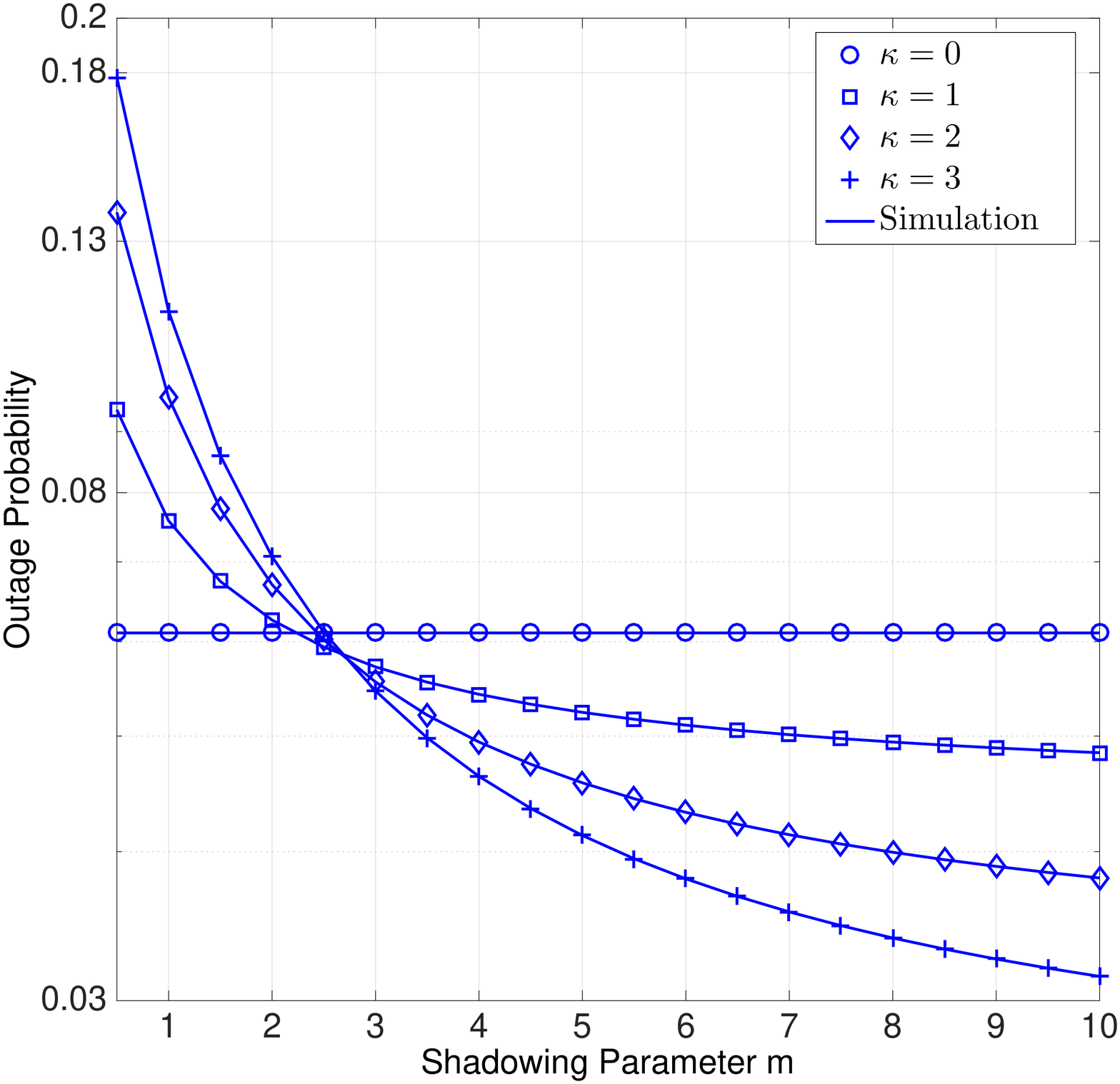}
\caption{Variation in outage probability with respect to shadowing parameter $m$. Here $r=650$m $\bar{\gamma}=\bar{\gamma}_i=1 \forall i$, $T=0$dB, $\mu=3$ $\kappa_i=1, \mu_i=1,m_i=10 \forall i$, and $\alpha=4$.}
\label{cov_m}
\vspace{-0.2in}
\end{figure}

\begin{figure}[ht]
 \centering
\includegraphics[scale=0.42]{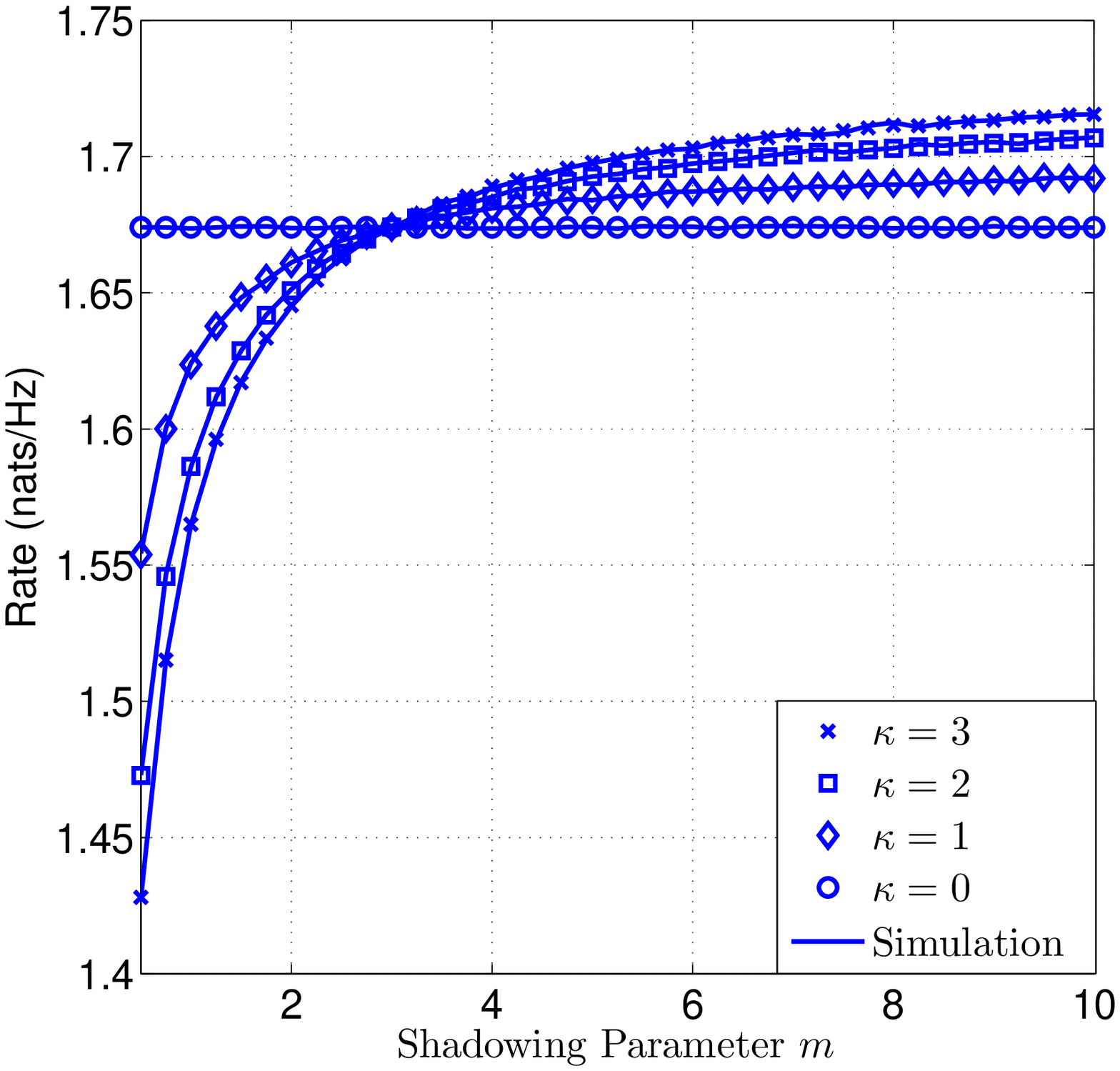}
\caption{Variation in rate with respect to shadowing parameter $m$. Here $r=650$m $\bar{\gamma}=\bar{\gamma}_i=1 \forall i$, $\mu=3$ $\kappa_i=0.5, \mu_i=1,m_i=10 \forall i$, and $\alpha=4$.}
\label{rate_fig_m}
\vspace{-0.2in}
\end{figure}
\vspace{-1.4in}
\subsection{FFR and SFR}
In this subsection, we compare the rate of FFR and SFR in the presence of $\kappa$-$\mu$ shadowed fading. The rate of FFR is given by \cite{7182316, 6047548}
\begin{equation}
\textstyle
 R_f = \int\limits_{r=0}^{R} R_{f,ce}(r)P[SIR>S_t]+\frac{1}{3}R_{f,ed}(r)P[SIR<S_t]\text{d}r \label{ffr}
\end{equation}
Here, the term $R_{f,ce}$ and $R_{f,ed}$ are the rate of cell-centre users and cell-edge users, respectively, and can be obtained using Theorem $2$. While computing $R_{f,ed}$, the number of interferers is taken to be $\frac{1}{3}$ of the number of interferers used for computing $R_{f,ce}$, since in cell-edge reuse $\frac{1}{3}$ frequency planning is used.  Similarly, $P[SIR>S_t]$ is the probability that users are classified as cell-centre users and can be obtained by using Theorem $1$. Note that this the $SIR$ for a reuse-1 system. Similarly, the rate of SFR is given by \cite{6047548}
\begin{equation}
 R_s = \int\limits_{r=0}^{R} R_{s,ce}(r)P[SIR>S_t]+R_{s,ed}(r)P[SIR<S_t]\text{d}r 
\label{sfr}
\end{equation}
Similar to the FFR scenario, all the terms given in \eqref{sfr} can be obtained using Theorem $1$ and Theorem $2$. However, in addition power control factor $\beta$ is used appropriately as in SFR $\beta$ is used for boosting the power of the cell-edge users.  For simulation, two tier network ($N=18$) with hexagonal structure has been considered. In each cell, $50$ physical resource blocks (PRBs) and  25 users are considered. The users are uniformly distributed in a cell and all resource blocks are uniformly shared among users. Further, we generate $\kappa$-$\mu$ shadowed RVs corresponding to SoI and $N$ interferers, and SIR per user and per PRB is evaluated. Similar to the \cite{7182316}, users with SIR higher than $S_t$ over $25$ PRBs or more than $25$ PRBs are classified as cell-centre users, otherwise they are classified as cell-edge users. For the FFR analytical computation, \eqref{ffr}, \eqref{final1} and \eqref{capacity20} are used, whereas for the SFR analytical computation, \eqref{sfr}, \eqref{final1} and \eqref{capacity20} are used.    Fig. \ref{fig:ffr_sfr} shows the rate of FFR and SFR system with respect to the shadowing parameter $m$. Firstly, it can be seen that as the shadowing parameter $m$ increases, the rate of both FFR and SFR schemes increases since shadowing effect reduces with increasing $m$. Secondly, it can be seen that the rate of FFR is always higher than the rate of SFR, even though SFR is more bandwidth efficient in a fully loaded system. Note that the similar results was presented in \cite{7452419} for Rayleigh fading and in \cite{6898829} for Nakagami-m fading. Now we have generalized it to $\kappa$-$\mu$ shadowed fading model which encompass both $\kappa$-$\mu$ and $\eta$-$\mu$ fading.  The intuitive reason for higher rate for FFR as follows: Note that the cell-edge users' SIR are quite low in a fully loaded system because of high interference. Even after employing SFR in the network, the cell-edge users' SIR are not significantly high. Whereas when FFR is employed, the cell-edge users' SIR increase significantly even in a fully loaded system as frequency reuse $\frac{1}{3}$ is employed at the cell-edge. Hence FFR performs better than SFR in a fully loaded system.

\begin{figure}[ht]
 \centering
\includegraphics[scale=0.24]{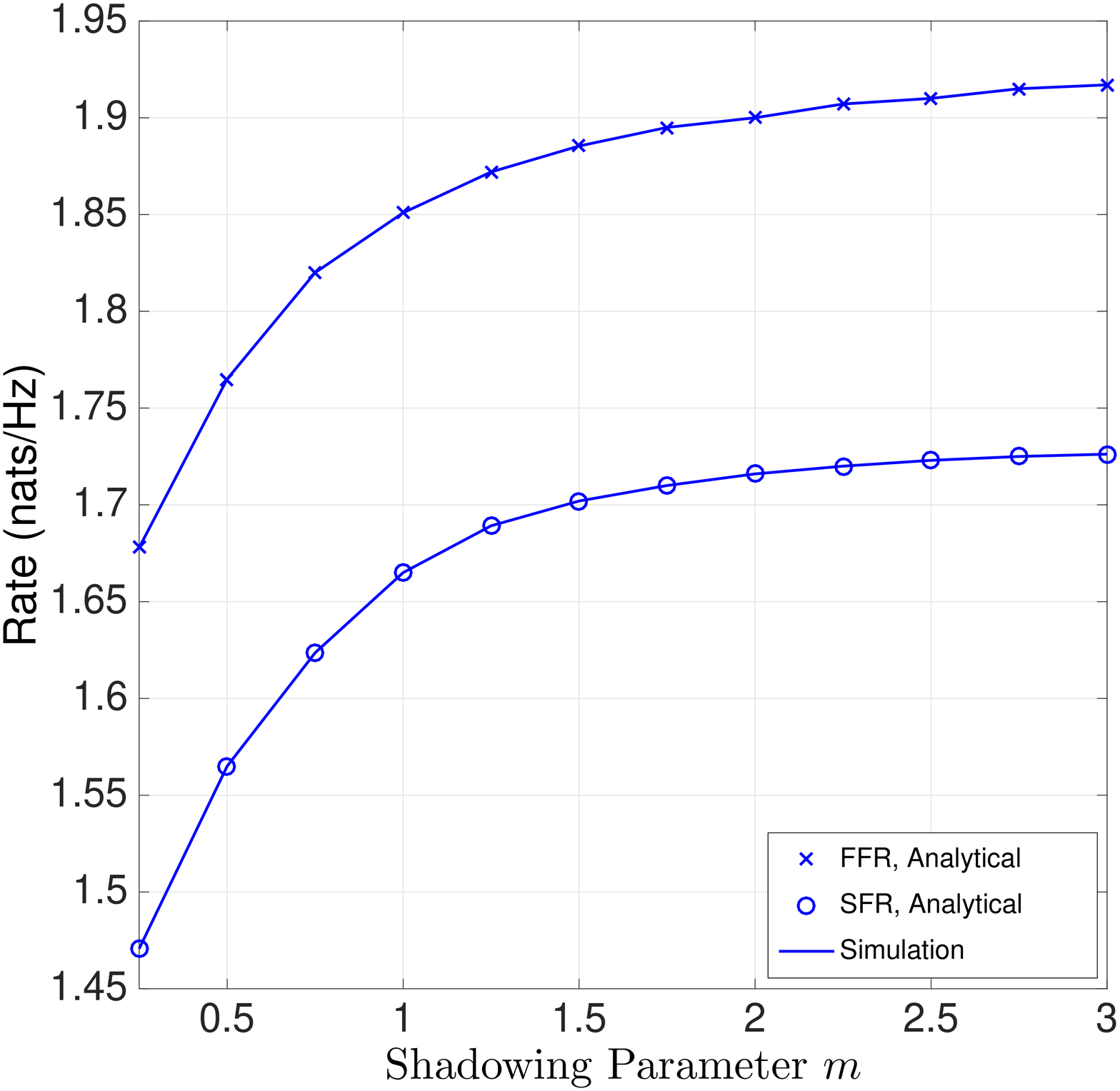}
\caption{Comparison of rate of FFR and SFR schemes. Here  $\bar{\gamma}=\bar{\gamma}_i=1 \forall i$,  $\mu=3$, $\kappa=2.5$, $\beta=2$ $\kappa_i=1, \mu_i=1.2, m_i=1.5 \forall i$, and $\alpha=3.4$.}
\label{fig:ffr_sfr}
\vspace{-0.2in}
\end{figure}

\vspace{-0.2in}

\section{Conclusion}
In this work, outage probability expression was derived when SoI and interferers both experience $\kappa$-$\mu$ shadowed fading. \textbf{The derived expression is valid for arbitrary SoI parameters, arbitrary $\kappa$ and $\mu$ parameters for all interferers and any value of the parameter $m$ for the interferers excepting the limiting value of $m\rightarrow \infty$. } The expression was given in terms of Pochhammer integral  where the integrands  only contain elementary functions. Also the expression was given in terms of a sum of Lauricella's function of the fourth kind, which can be easily evaluated numerically. Then the expression is simplified for the cases  when SoI experiences $\kappa$-$\mu$ fading and when both SoI and interferers experience $\eta$-$\mu$ fading.  Also, the outage probability is simplified for the case when SoI experience Hoyt fading and the interferers experience $\eta$-$\mu$ fading, and it was given in terms of single Lauricella's function of the fourth kind.   Further, the rate expression is derived for the case when  SoI experience  $\kappa$-$\mu$ shadowed fading with integer $\mu$ and interferers experience $\kappa$-$\mu$ shadowed fading with arbitrary parameters. Rate expression was also given in terms of  sum of Lauricella's function of the fourth kind.  We have also compared FFR and SFR  and shown that the FFR  outperforms SFR  in the presence of $\kappa$-$\mu$ shadowed fading. Finally, extensive simulation results were  given and these match with our analytical results. 
\begin{appendices}
\vspace{-0.2in}
\section{}
\label{outage_prob}
The outage probability expression can be written as $P(\text{SIR}<T)=P\bigg(I>\frac{g}{T}\bigg)$, where $T$ is the target SIR. The cdf of $I$ is given as \cite{6594884}
\begin{equation*}
\textstyle
F_I(y)=\left(\prod\limits_{i=1}^{N}\frac{1}{\theta_i^{\mu_i-m_i}\lambda_i^{m_i}}\right)\frac{y^{\sum\limits_{i=1}^{N}\mu_i}}{\Gamma\left(1+\sum\limits_{i=1}^{N}\mu_i\right)}\times
\end{equation*} 
\begin{equation}
\textstyle
\Phi_2^{(2N)}\bigg(\mu_1-m_1,\cdots, \mu_N-m_N,m_1\cdots,m_N;1+\sum\limits_{i=1}^{N}\mu_i;-\frac{y}{\theta_1},
\cdots,-\frac{y}{\theta_N}, -\frac{y}{\lambda_1},\cdots, \frac{y}{\lambda_N}\bigg)
\end{equation}
here $\Phi_2^{(2N)}(.)$ denotes confluent multivariate hypergeometric function and $\theta_i=\frac{\bar{\gamma}_i}{\mu_i(1+\kappa_i)}$, $\lambda_i=\frac{(\mu_i\kappa_i+m_i)\bar{\gamma}_i}{\mu_i(1+\kappa_i)m_i}$.
Using the cdf of $I$, the outage probability is given by
\begin{equation}
O_p=1-E_g\left(F_I\left(\frac{g}{T}\right)\right)
\end{equation}
Since $ g\sim S_{\kappa\mu}(\kappa,\mu,m, \bar{\gamma})$, outage probability can be simplified as 
\begin{equation*}
\textstyle
O_p=1-\int\limits_{0}^{\infty}\frac{g^{\mu-1} e^{-\frac{g}{\theta}}} {\theta^{\mu-m}\lambda^m\Gamma(\mu)}  {}_1 F_1 \left(m,\mu,\frac{g}{\theta}-\frac{g}{\lambda}\right) \left(\prod\limits_{i=1}^{N}\frac{1}{\theta_i^{\mu_i-m_i}\lambda_i^{m_i}}\right)\frac{(\frac{g}{T})^{\sum\limits_{i=1}^{N}\mu_i}}{\Gamma\left(1+\sum\limits_{i=1}^{N}\mu_i\right)} \times
\end{equation*} 
\begin{equation}
\textstyle \Phi_2^{(2N)}\bigg(\mu_1-m_1,\cdots, \mu_N-m_N,m_1\cdots,m_N;
1+\sum\limits_{i=1}^{N}\mu_i;-\frac{g}{T\theta_1},\cdots,-\frac{g}{T\theta_N}, -\frac{g}{T\lambda_1},\cdots, \frac{g}{T\lambda_N}\bigg) \text{d}g  \label{interchange}.
\end{equation}
Using the transformation of variables with $\frac{g}{\theta}=t$, above expression can be rewritten as
\begin{equation*}
\textstyle
O_p=1-K'\int\limits_{0}^{\infty}t^{\sum\limits_{i=1}^{N}\mu_i+\mu-1} e^{-t}{}_1 F_1 \left(m,\mu,\big(1-\frac{\theta}{\lambda}\big)t\right)\times
\end{equation*} 
\begin{equation}
\textstyle
\Phi_2^{(2N)}\bigg(\mu_1-m_1,\cdots, \mu_N-m_N,m_1\cdots,m_N;1+\sum\limits_{i=1}^{N}\mu_i;
-\frac{t\theta}{T\theta_1},\cdots,-\frac{t\theta}{T\theta_N}, -\frac{t\theta}{T\lambda_1},\cdots, \frac{t\theta}{T\lambda_N}\bigg) \text{d}t\label{eq:cov2}
\end{equation}
$K'=\frac{\left(\prod\limits_{i=1}^{N}\frac{1}{\theta_i^{\mu_i-m_i}\lambda_i^{m_i}}\right)}{\Gamma\left(1+\sum\limits_{i=1}^{N}\mu_i\right){T}^{\sum\limits_{i=1}^{N}\mu_i}}\frac{\theta^{\sum\limits_{i=1}^{N}\mu_i+m}} {\lambda^m\Gamma(\mu)}$.
In order to  simplify \eqref{eq:cov2}, we use the following relationship between  confluent multivariate hypergeometric function and ${}_{(1)}^{(k)}E_D^{(N)}(.)$ \cite[P. 95] {exton1976multiple}.
\begin{equation*}
\Gamma(a){}_{(1)}^{(k)}E_D^{(N)}[a,b_1,\cdots,b_N;c,c';x_1,\cdots, x_N]=
\end{equation*}
\begin{equation}
\int\limits_{0}^{\infty}e^{-t}t^{a-1}\Phi_2^{(k)}[b_1,\cdots,b_k;c;x_1t,\cdots,x_kt] 
\Phi_2^{(N-k)}[b_{k+1},\cdots,b_N;c';x_{k+1}t,\cdots,x_Nt]\text{d}t\label{eq:relationship1}.
\end{equation} 
Applying \eqref{eq:relationship1} to evaluate \eqref{eq:cov2}, one obtains
\begin{equation*} 
 O_p=1-K
{}_{(1)}^{(1)}E_D^{(2N+1)}\bigg(\sum\limits_{i=1}^{N}\mu_i+\mu, m,\mu_1-m_1,\cdots,
\end{equation*}
\begin{equation}
\textstyle
\mu_N-m_N,m_1\cdots,m_N;\mu,1+\sum\limits_{i=1}^{N}\mu_i;1-\frac{\theta}{\lambda},-\frac{\theta}{T\theta_1},\cdots,-\frac{\theta}{T\theta_N}, -\frac{\theta}{T\lambda_1},\cdots, -\frac{\theta}{T\lambda_N}\bigg) \label{kappa_mu3},
\end{equation}
where $K=\frac{\Gamma(\sum\limits_{i=1}^{N}\mu_i+\mu)\left(\prod\limits_{i=1}^{N}\frac{1}{\theta_i^{\mu_i-m_i}\lambda_i^{m_i}}\right)}{\Gamma\left(1+\sum\limits_{i=1}^{N}\mu_i\right){T}^{\sum\limits_{i=1}^{N}\mu_i}}\frac{\theta^{\sum\limits_{i=1}^{N}\mu_i+m}} {\lambda^m\Gamma(\mu)}$. 
Note that ${}_{(1)}^{(1)}E_D^{(N)}(.)$ is closely related to Lauricella function of fourth kind $F_D^{(N)}(.) $\cite{exton1976multiple}. A series expression for ${}_{(1)}^{(1)}E_D^{(N)}$ is given by
\begin{equation} 
{}_{(1)}^{(1)}E_D^{(N)}[a,b_1,\cdots, b_N;c, c';x_1,\cdots, x_N]=\sum\limits_{i_1\cdots i_N=0}^{\infty}\frac{(a)_{i_1+\cdots+i_N}(b_1)_{i_1}\cdots(b_N)_{i_N}}{(c)_{i_1}(c')_{i_2+\cdots+i_N}}\frac{x_1^{i_1}}{i_1!}\cdots\frac{x_N^{i_N}}{i_N!},\label{eq:lauricella3}
\end{equation}
with  region of convergence: $|x_1|<r_1,\cdots, |x_N|<r_N$; $r_2=r_3=\cdots=r_N$, and $r_1+r_N=1$. In order to obtain a series expression for ${}_{(1)}^{(1)}E_D^{(N)} (.)$ which converges, we use the following property of the ${}_{(1)}^{(1)}E_D^{(N)}(.)$ \cite[P.123]{exton1976multiple}.
\begin{equation*}
{}_{(1)}^{(1)}E_D^{(N)}[a,b_1,\cdots, b_N;c, c';x_1,\cdots, x_N]=(1-x_N)^{-a}\times
\end{equation*}
\begin{eqnarray}
\textstyle
{}_{(1)}^{(1)}E_D^{(N)}[a,b_1,\cdots,b_{N-1},c'-b_2-\cdot- b_N;c, c';\frac{x_1}{1-x_2},
\frac{x_2}{x_2-1},\frac{x_2-x_{3}}{x_{2}-1}\cdots,\frac{x_2-x_{N}}{x_{2}-1}]
\end{eqnarray}
and rewrite   \eqref{kappa_mu3} as
\begin{equation*}
\textstyle
O_p=1-K\big(\frac{T\theta_1}{\theta +T\theta_1}\big)^{\sum\limits_{i=1}^{N}\mu_i+\mu}{}_{(1)}^{(1)}E_D^{(2N+1)}\bigg[\sum\limits_{i=1}^{N}\mu_i+\mu, m,\mu_1-m_1,\cdots,
\end{equation*}
\begin{equation}
\textstyle
 \mu_N-m_N,m_1\cdots,m_{N-1},1;\mu,1+\sum\limits_{i=1}^{N}\mu_i;\frac{(\lambda-\theta)T\theta_1}{(T\theta_1+\theta)\lambda}, \frac{\theta}{\theta +T\theta_1}, \frac{\theta\theta_2-\theta\theta_1}{\theta_{2}(\theta+T\theta_1)},\cdots,\frac{\theta\lambda_{N}-\theta\theta_1}{\lambda_{N}(\theta+T\theta_1)}\bigg] \label{final}.
\end{equation}
It is apparent that $\left|\frac{(\lambda-\theta)T\theta_1}{(T\theta_1+\theta)\lambda}\right|<\frac{T\theta_1}{(T\theta_1+\theta)}$, $\left|\frac{\theta\theta_2-\theta\theta_1}{\theta_{2}(\theta+T\theta_1)}\right|<\frac{\theta}{\theta+T\theta_1}$, and $\frac{T\theta_1}{(T\theta_1+\theta)}+\frac{\theta}{\theta+T\theta_1}=1$. Hence, the series expression given in \eqref{final} converges. Therefore ${}_{(1)}^{(1)}E_D^{(N)} (.)$ can be evaluated using series expression.
\vspace{-0.2in}
\section{}
\label{evaluation}
In this Appendix, we further simplified outage probability given in \eqref{final1} in two different ways so that it can be evaluated easily. We use the following property of the ${}_{(1)}^{(1)}E_D^{(N)}(.)$ \cite[P.287]{exton1976multiple}.
\begin{eqnarray}
&{}_{(1)}^{(1)}E_D^{(N)}[c+c'-1,b_1,\cdots, b_N;c, c';x_1,\cdots, x_N]=\nonumber\\ &(1-x_1)^{-b_1}\cdots (1-x_N)^{-b_N}C_N^{(1)}[b_1,\cdots,b_{N},;1-c, 1-c';\frac{x_1}{1-x_1},\cdots,\frac{x_N}{1-x_N}]\label{trans},
\end{eqnarray}
where $C_N^{(k)}(.)$ is a generalization of the Horn function \cite[p. 104]{exton1976multiple}.
From \eqref{kappa_mu3}, it is clear that $c=\mu,c'=\sum\limits_{i=1}^{N}\mu_i+1$ and $a=\sum\limits_{i=1}^{N}\mu_i+\mu=c+c'-1$. Therefore using  \eqref{trans}, outage probability given in \eqref{kappa_mu3} (in Appendix \ref{outage_prob}) can be simplified as
\begin{equation*}
\textstyle
O_p=1-K(\frac{\theta}{\lambda})^{-m}(1+\frac{\theta}{T\theta_1})^{m_1-\mu_1}\cdots(1+\frac{\theta}{T\lambda_1})^{-m_N}\times
\end{equation*}
\begin{equation}
\textstyle
C_N^{(1)}\bigg[ m,\mu_1-m_1,\cdots, \mu_N-m_N,m_1\cdots,m_{N-1},1;1-\mu,-\sum\limits_{i=1}^{N}\mu_i;\frac{\lambda}{\theta}-1,\frac{-\theta}{\theta+T\theta_1},\cdots, \frac{-\theta}{\theta+T\lambda_N}\bigg] \label{final3}.
\end{equation}
The  Pochhammer integral  of $C_N^{(1)}$ is given by \cite[p. 105]{exton1976multiple}
\begin{equation*}
\frac{(2\pi i)^2 C_N^{(1)}[b_1,\cdots,b_{N},;c, c';x_1,\cdots,x_N]}{\Gamma(1-c)\Gamma(c+c')\Gamma(1-c')}= 
\end{equation*}
\begin{equation}
\int (-u)^{a-1}(u-1)^{-c-c'}(1+\frac{x_1}{u})^{-b_1}(1+ux_2)^{-b_2}\cdots (1+ux_N)^{-b_N}\text{d}u.
\end{equation}
The path of integration is a Pochhammer contour that is a double loop slung around the points $0$ and $1$. This loop starts from a point $p$ (say) between $0$ and $1$, encircles $0$ and $1$ in the positive direction, and then encircles the same two points again in the negative direction  \cite{exton1976multiple}.
Note that the  integrands of above integral expression only contains elementary functions. Thus, the outage probability expression is given in terms of Pochhammer integral where integrands consist only elementary functions and hence it can be easily evaluated.

We will now derive an alternate expression for ${}_{(1)}^{(1)}E_D^{(N)}(.)$ in terms of infinite sums of Lauricella's function of the fourth kind $F_D^{(N-1)}(.)$.  Then analytically we show that the infinite series can be truncated  to a finite series  with the number of terms being chosen such that the truncation error is lower than $\epsilon$, where $\epsilon$ is in the order of  $10^{-5}$ or lower. Moreover, as the number of terms increases,  the truncation error further decreases. The Matlab code to evaluate Lauricella's function of the fourth kind $F_D^{(N)}(.)$ is readily available and it can be downloaded from \cite{fd}.
The series  expression of ${}_{(1)}^{(1)}E_D^{(N)}$ for the outage probability  is given by
\begin{equation} 
{}_{(1)}^{(1)}E_D^{(N)}[a,b_1,\cdots, b_N;c, c';x_1,\cdots, x_N]=\sum\limits_{i_1\cdots i_N=0}^{\infty}\frac{(a)_{i_1+\cdots+i_N}(b_1)_{i_1}\cdots(b_N)_{i_N}}{(c)_{i_1}(c')_{i_2+\cdots+i_N}}\frac{x_1^{i_1}}{i_1!}\cdots\frac{x_N^{i_N}}{i_N!},\label{eq:lauricella2}
\end{equation}
 Now, using the property of  that Pochhammer symbol that $(a)_{m+n}=(a+m)_{n}(a)_m$ (equivalently, $\Gamma(a+m)(a+m)_n=(a)_{m+n}\Gamma(a)$),  $\eqref{eq:lauricella2}$ can be rewritten as 
\begin{equation}
\textstyle
{}_{(1)}^{(1)}E_D^{(N)}(.)=\sum\limits_{i_1\cdots i_N=0}^{\infty}\frac{\Gamma(a+i_1)(a+i_1)_{i_2+\cdots+i_N}(b_1)_{i_1}\cdots(b_N)_{i_N}}{\Gamma(a)(c)_{i_1}(c')_{i_2+\cdots+i_N}}\frac{x_1^{i_1}}{i_1!}\cdots\frac{x_N^{i_N}}{i_N!},\label{eq:lauricella4}
\end{equation}
\begin{equation}
\textstyle
{}_{(1)}^{(1)}E_D^{(N)}(.)=\sum\limits_{i_1=0}^{\infty} \frac{\Gamma(a+i_1)(b_1)_{i_1}x_1^{i_1}}{\Gamma(a)(c)_{i_1}i_1!}
\sum\limits_{i_2\cdots i_N=0}^{\infty}\frac{(a+i_1)_{i_2+\cdots+i_N}(b_2)_{i_2}\cdots(b_N)_{i_N}}{(c')_{i_2+\cdots+i_N}}\frac{x_2^{i_2}}{i_2!}\cdots\frac{x_N^{i_N}}{i_N!},\label{eq:lauricella5}
\end{equation}
The $N-1$ fold series expression given in \eqref{eq:lauricella5} is equivalent to the series expression  of $F_D^{N-1}(.)$ \cite{aalo} and hence
\begin{equation}
\scriptstyle
{}_{(1)}^{(1)}E_D^{(N)}[a,b_1,\cdots, b_N;c, c';x_1,\cdots, x_N]= \sum\limits_{i_1=0}^{\infty} \frac{\Gamma(a+i_1)(b_1)_{i_1}x_1^{i_1}}{\Gamma(a)(c)_{i_1}i_1!}F_D^{(N-1)}[a+i_1,b_2,\cdots, b_N; c';x_2,\cdots, x_N] \label{ED}.
\end{equation}
Note that the convergence condition for Lauricella's function of the fourth kind $F_D(.)$ will be satisfied if the convergence condition for  ${}_{(1)}^{(1)}E_D^{(N)}$ is satisfied, since $x_2,\cdots, x_N$ are same for both and the convergence criteria of  ${}_{(1)}^{(1)}E_D^{(N)}$ is $|x_1|<r_1,\cdots, |x_N|<r_N$; $r_2=r_3=\cdots=r_N$, and $r_1+r_N=1$ while the convergence criteria for the series expression of  $F_D(.)$ is $\max\{|x_2|,$ $\cdots|x_N|\}<1$.
Using \eqref{ED} and the following transformation
\begin{equation*}
\textstyle
F_D^{(N)}[a,b_1,\cdots, b_N; c ;x_1,\cdots, x_N]=(1-x_1)^{c-a-b_N} \left[ \prod_{i=1}^{N-1}(1-x_i)^{-b_i} \right] \times
\end{equation*}
\begin{equation}
\textstyle
F_D^{(N)}\bigg(c-a,b_1,\cdots, b_{N-1},c-b_1-\cdot-b_N; c ;\frac{x_N-x_1}{x_N-1},\cdots, \frac{x_{N-1}-x_1}{x_{N-1}-1},x_N\bigg), \label{eq:cov5}
\end{equation}
Eq. \eqref{final} can be rewritten in terms of $F_D(.)$ and is given by 
\begin{equation*}
\textstyle
O_p=K_2\sum\limits_{p=0}^{\infty} \frac{(m)_p\Big(1-\frac{\theta}{\lambda}\Big)^p\Gamma(\sum\limits_{i=1}^{N}\mu_i+p+\mu)}{(\mu)_p p!} F_D^{(2N)}\bigg(1-p-\mu,\mu_1-m_1,\cdots, \mu_N-m_N,m_1\cdots,m_N;
\end{equation*} 
\begin{equation}
\textstyle
 1+\sum\limits_{i=1}^{N}\mu_i; \frac{\theta}{\theta+T\theta_1},\cdots,\frac{\theta}{\theta+T\theta_N}, \frac{\theta}{\theta+T\lambda_1},\cdots, \frac{\theta}{\theta+T\lambda_N}\bigg) \label{kappa_mu401},
\end{equation}
where $K_2=\frac{\left(\prod\limits_{i=1}^{N}\left(\frac{\theta}{\theta+T\theta_i}\right)^{\mu_i-m_i}\left(\frac{\theta}{\theta+T\lambda_i}\right)^{m_i}\right)}{\Gamma\left(1+\sum\limits_{i=1}^{N}\mu_i\right)}\frac{\theta^ m }{\Gamma(\mu)(\lambda)^m}$. 

\vspace{-0.2in}
\section{}
\label{truncation}
 The outage probability with first $P$ terms is given by
\begin{equation*}
\textstyle
O_{p,P}=K_2\sum\limits_{p=0}^{P} \frac{(m)_p\Big(1-\frac{\theta}{\lambda}\Big)^p\Gamma(\sum\limits_{i=1}^{N}\mu_i+p+\mu)}{(\mu)_p p!} \times
\end{equation*} 
\begin{equation}
\textstyle
F_D^{(2N)}\bigg(1-p-\mu,\mu_1-m_1,\cdots, \mu_N-m_N,m_1\cdots,m_N; 1+\sum\limits_{i=1}^{N}\mu_i;\frac{\theta}{\theta+T\theta_1},\cdots,\frac{\theta}{\theta+T\theta_N}, \frac{\theta}{\theta+T\lambda_1},\cdots, \frac{\theta}{\theta+T\lambda_N}\bigg)\label{kappa_mu1}.
\end{equation}

In this appendix, we will bound the difference $e_P$ given by $O_{p}-O_{p,P}$ and show that to obtain $e_P< \epsilon$ (where $\epsilon$ is in the order of $10^{-5}$ or lower), the number of terms required, i.e., $P$ is finite. Also, as the number of terms $P$ increases, the $\epsilon$ decreases. $e_P=O_{p}-O_{p,P}$ is given by
\begin{equation*}
\textstyle
e_{P}=K_2\sum\limits_{p=P+1}^{\infty} \frac{(m)_p\Big(1-\frac{\theta}{\lambda}\Big)^p\Gamma(\sum\limits_{i=1}^{N}\mu_i+p+\mu)}{(\mu)_p p!} \times
\end{equation*} 
\begin{equation}
\textstyle
F_D^{(2N)}\bigg(1-p-\mu,\mu_1-m_1,\cdots, \mu_N-m_N,m_1\cdots,m_N;1+\sum\limits_{i=1}^{N}\mu_i;\frac{\theta}{\theta+T\theta_1},\cdots,\frac{\theta}{\theta+T\theta_N}, \frac{\theta}{\theta+T\lambda_1},\cdots, \frac{\theta}{\theta+T\lambda_N}\bigg) \label{kappa_mu2}.
\end{equation}
In order to bound $e_P$, first we bound $F_D^{(2N)}$. Note that 
\begin{equation}
F_D^{(N)}[a,b_1,\cdots, b_N; c ;x_1,\cdots, x_N]\leq F_D^{(N)}[a,b_1,\cdots, b_N; c ;x,\cdots, x]={}_2F_1[a,\sum\limits_{i=1}^{N}b_i,c,x].\label{in}
\end{equation}
Here $x \in \{x_1,\cdots, x_N\}$ is chosen such that above inequality holds. For example if  $a,b_1,\cdots, b_N$ are all positive, then $x=\max\{x_1,\cdots, x_N\}$. Applying inequality given in \eqref{in}, one can bound $e_P$ and it is given by
\begin{equation}
\textstyle
e_{P}\leq K_2 \sum\limits_{p=P+1}^{\infty} \frac{(m)_p\Big(1-\frac{\theta}{\lambda}\Big)^p\Gamma(\sum\limits_{i=1}^{N}\mu_i+p+\mu)}{(\mu)_p p!}  {}_2F_1[1-p-\mu, \sum\limits_{i=1}^{N}\mu_i,1+\sum\limits_{i=1}^{N}\mu_i,\frac{\theta}{\theta+Tx}]\label{error_f1}
\end{equation}
here $x \in \{\theta_1,\cdots, \theta_N,\lambda_1,\cdots, \lambda_N\}$.  We now consider two different cases\footnote{Note that $\sum\limits_{i=1}^{N}\mu_i$ represents the number of multipath cluster across all interferers and typically this is  one or greater than one.}: case $(i)$ $\sum\limits_{i=1}^{N}\mu_i\approx 1$ and case $(ii)$ $\sum\limits_{i=1}^{N}\mu_i>>1$. For case $(i)$, we use the transformation ${}_2F_1[a,b,c,x]=(1-x)^{c-a-b}{}_2F_1[c-a,c-b,c,x]$. Therefore ${}_2F_1(.)$ given in \eqref{error_f1} can be rewritten as
\begin{equation}
\textstyle
{}_2F_1[1-p-\mu, \sum\limits_{i=1}^{N}\mu_i,1+\sum\limits_{i=1}^{N}\mu_i,\frac{\theta}{\theta+Tx}]= \left(\frac{Tx}{\theta+Tx}\right)^{\mu+p}\, {}_2F_1[\sum\limits_{i=1}^{N}\mu_i+p+\mu, 1,1+\sum\limits_{i=1}^{N}\mu_i,\frac{\theta}{\theta+Tx}]
\end{equation}
and this can be bounded as
\begin{equation*}
\textstyle
{}_2F_1[\sum\limits_{i=1}^{N}\mu_i+p+\mu, 1,1+\sum\limits_{i=1}^{N}\mu_i,\frac{\theta}{\theta+Tx}]=\sum\limits_{i=0} ^{\infty}\frac{(\sum\limits_{i=1}^{N}\mu_i+p+\mu)_i(1)_i}{(1+\sum\limits_{i=1}^{N}\mu_i)_i}\frac{(\frac{\theta}{\theta+Tx})^i}{i!} 
\end{equation*}
\begin{equation}
\textstyle
\leq \sum\limits_{i=0} ^{\infty}\frac{(\sum\limits_{i=1}^{N}\mu_i+p+\mu)_i(1)_i}{(1)_i}\frac{(\frac{\theta}{\theta+Tx})^i}{i!}  = {_2F_1} \left(\sum\limits_{i=1}^{N}\mu_i+p+\mu, 1,1,\frac{\theta}{\theta+Tx} \right)=\left(\frac{Tx}{\theta+Tx}\right)^{-\sum\limits_{i=1}^{N}\mu_i-p-\mu}
\label{cp_series2}.
\end{equation}
Hence for case (i), ${}_2F_1(.)$ can be bounded as 
\begin{equation}
\textstyle
{}_2F_1[1-p-\mu, \sum\limits_{i=1}^{N}\mu_i,1+\sum\limits_{i=1}^{N}\mu_i,\frac{\theta}{\theta+Tx}]\leq \left(\frac{Tx}{\theta+Tx}\right)^{-\sum\limits_{i=1}^{N}\mu_i} \label{bound_i}.
\end{equation}
Therefore, using \eqref{error_f1} and \eqref{bound_i}, $e_P$   for case $(i)$  can be written as
\begin{equation}
\textstyle
e_{P}\leq K_3 \sum\limits_{p=P+1}^{\infty} \frac{\Gamma(P+m)}{\Gamma(P+1)}\frac{\Gamma(\sum\limits_{i=1}^{N}\mu_i+P+\mu)}{\Gamma(\mu+P)}  \Big(1-\frac{\theta}{\lambda}\Big)^P  \left(\frac{Tx}{\theta+Tx}\right)^{-\sum\limits_{i=1}^{N}\mu_i}\label{error_i}
\end{equation} 
\begin{equation*}
\textstyle
=K_3\left(\frac{Tx}{\theta+Tx}\right)^{-\sum\limits_{i=1}^{N}\mu_i} \frac{\Gamma(P+m+1)}{\Gamma(P+2)}\frac{\Gamma(\sum\limits_{i=1}^{N}\mu_i+P+\mu+1)}{\Gamma(\mu+P+1)}  \Big(1-\frac{\theta}{\lambda}\Big)^{P+1} 
\end{equation*}
\begin{equation}
\textstyle
\, _3F_2(1,P+m+1,\sum\limits_{i=1}^{N}\mu_i+P+\mu+1;P+2,\mu+P+1;1-\frac{\theta}{\lambda})\label{final_i}
\end{equation}
where $K_3=K_2\Gamma(\mu)/\Gamma(m)$.
For case $(ii)$ $\sum\limits_{i=1}^{N}\mu_i>>1$, ${}_2F_1(.)$ can be approximated as
\begin{equation}
\textstyle
 {}_2F_1[1-P-\mu, \sum\limits_{i=1}^{N}\mu_i,1+\sum\limits_{i=1}^{N}\mu_i,\frac{\theta}{\theta+Tx}] \approx{}_2F_1[1-P-\mu, \sum\limits_{i=1}^{N}\mu_i,\sum\limits_{i=1}^{N}\mu_i,\frac{\theta}{\theta+Tx}]\label{approx}.
\end{equation}
Using the following transformation \cite[P-19]{exton1976multiple}, ${}_2F_1[a,b;c;x]=(1-x)^{-a} {}_2F_1[a,c-b;c;x/(x-1)],$
the function ${}_2F_1(.)$ given in \eqref{approx} can be simplified as
\begin{equation}
\textstyle
{}_2F_1[1-P-\mu, \sum\limits_{i=1}^{N}\mu_i,\sum\limits_{i=1}^{N}\mu_i,\frac{\theta}{\theta+Tx}]=\left(\frac{Tx}{\theta+Tx}\right)^{P+\mu-1} \label{approx1}.
\end{equation}
Now, using \eqref{error_f1}, \eqref{approx} and \eqref{approx1}, $e_P$   for case $(ii)$  can be written as
\begin{equation*}
\textstyle
e_{P}\leq K_3  \sum\limits_{p=P+1}^{\infty} \frac{\Gamma(P+m)}{\Gamma(P+1)}\frac{\Gamma(\sum\limits_{i=1}^{N}\mu_i+P+\mu)}{\Gamma(\mu+P)}  \Big(1-\frac{\theta}{\lambda}\Big)^P  \left(\frac{Tx}{\theta+Tx}\right)^{P+\mu-1}
\end{equation*}
\begin{equation*} 
\textstyle
=K_3 \frac{\Gamma(P+m+1)}{\Gamma(P+2)}\frac{\Gamma(\sum\limits_{i=1}^{N}\mu_i+P+\mu+1)}{\Gamma(\mu+P+1)}  \Big(1-\frac{\theta}{\lambda}\Big)^{P+1} \left(\frac{Tx}{\theta+Tx}\right)^{P+\mu} \times \, 
\end{equation*}
\begin{equation}
\textstyle
_3F_2\Big(1,P+m+1,\sum\limits_{i=1}^{N}\mu_i+P+\mu+1;P+2,\mu+P+1;(1-\frac{\theta}{\lambda})(\frac{Tx}{\theta+Tx})\Big) \label{error}.
\end{equation}
Using the following identity from \cite{tricomi1951}
\begin{equation}
\textstyle
\frac{\Gamma(n+a)}{\Gamma(n+b)}= n^{a-b}\left(1+\frac{(a-b)(a+b-1)}{2n}+O(|n|^{-2})\right),\label{identity}
\end{equation}
and putting $1-\frac{\theta}{\lambda}=\frac{\mu\kappa}{\mu\kappa+m}$ into \eqref{error}, one can rewrite the truncation error  as
\begin{equation*}
\textstyle
e_P\leq K_3 \left(\underset{\text{I term}}{P^{m-1}+\frac{P^{m-2}(m+2)(m-1)}{2}}\right) \bigg(\underset{\text{II term}}{P^{\sum\limits_{i=1}^{N}\mu_i}+\frac{P^{\sum\limits_{i=1}^{N}\mu_i-1}(\sum\limits_{i=1}^{N}\mu_i-1)(\sum\limits_{i=1}^{N}\mu_i+2\mu+1)}{2}}\bigg)\times
\end{equation*}
\begin{equation}
\textstyle
 \underset{\text{III term}}{\left(\frac{\mu\kappa}{\mu\kappa+m} \right)^{P+1}} \underset{\text{IV term}}{ \left(\frac{Tx}{\theta+Tx}\right)^{P+\mu} } \underset{\text{V term}}{ {}_3F_2\Big(1,P+m+1,\sum\limits_{i=1}^{N}\mu_i+P+\mu+1;P+2,\mu+P+1;(\frac{\mu\kappa}{\mu\kappa+m} )(\frac{Tx}{\theta+Tx})\Big)}\label{cp_series4}.
\end{equation}
We now analyze the  upper bound of $e_P$ given in \eqref{cp_series4}. Let us first analyze the vth term given in \eqref{cp_series4} which contains the function ${}_3F_2(.)$. The vth term can be upper bounded as
\begin{equation}
\scriptstyle
{}_3F_2\Big(1,P+m+1,\sum\limits_{i=1}^{N}\mu_i+P+\mu+1;P+2,\mu+P+1;(\frac{\mu\kappa}{\mu\kappa+m} )(\frac{Tx}{\theta+Tx})\Big) \leq{}_3F_2\Big(1, P+\tilde{m}+1,\tilde{n}+P+\mu+1;P+2,\mu+P+1;(\frac{\mu\kappa}{\mu\kappa+m} )(\frac{Tx}{\theta+Tx})\Big).\label{cp_series40} 
\end{equation}
Here $\tilde{m}$ and $\tilde{n}$ are the smallest integer greater than  $m$ and $\sum\limits_{i=1}^{N}\mu_i$, respectively.  Now, using the following reduction formula given in \cite[Eq. (2)]{shpot2015clausenian},  one can rewrite the the vth term given in the right side of \eqref{cp_series40} as 
\begin{equation}
\textstyle
\text{vth term}\leq\sum\limits_{j_1=0}^{\tilde{m}-1}\sum\limits_{j_2=0}^{\tilde{n}}\frac{(\sum\limits_{i=1}^{N}\mu_i+P+\mu+1)_{j_1}(1)_{j_1+j_2}}{(P+2)_{j_1}(\mu+P+1)_{j_1+j_2}} {}_1F_0\big(1+j_1+j_2, (\frac{\mu\kappa}{\mu\kappa+m} )(\frac{Tx}{\theta+Tx}) \big)
\end{equation}
Using the fact that $(a)_n=\frac{\Gamma(a+n)}{\Gamma(a)}$, and ${}_1F_0(a; ;z)=(1-z)^{-a}$ one can rewrite the above term as
\begin{equation}
\textstyle
\text{vth term}\leq \sum\limits_{j_1=0}^{\tilde{m}-1}\sum\limits_{j_2=0}^{\tilde{n}}\frac{\Gamma(\sum\limits_{i=1}^{N}\mu_i+P+\mu+1+j_1)\Gamma(P+2)}{\Gamma(\sum\limits_{i=1}^{N}\mu_i+P+\mu+1) \Gamma(P+2+j_1) } \frac{\Gamma (1+j_1+j_2) \Gamma(\mu+P+1)}{\Gamma(\mu+P+1+j_1+j_2)}   \Big(1- \frac{\mu\kappa}{\mu\kappa+m} \frac{Tx}{\theta+Tx} \Big)^{-1-j_1-j_2}
\end{equation}
Further, using the identity given in \eqref{identity} and at sufficiently higher value of $P$, the vth term can be approximated as, vth term $\approx$
\begin{equation}
\textstyle
\sum\limits_{j_1=0}^{\tilde{m}-1}\sum\limits_{j_2=0}^{\tilde{n}}\frac{(P)^{j_1} \Gamma (1+j_1+j_2) }{(P)^{j_1} (P)^{j_1+j_2} }  \Big(1- \frac{\mu\kappa}{\mu\kappa+m} \frac{Tx}{\theta+Tx} \Big)^{-1-j_1-j_2}
\end{equation}
It is obvious from the above expression that as $P$ increases the $v$th term decreases. Now we show that the multiplication of first four terms  given in \eqref{cp_series4} also decreases as $P$ increases. Note that at higher value of $P$, one can approximate the multiplication of first four terms given in \eqref{cp_series4} as
\begin{equation*}
\textstyle
e_{P,4}\approx K_3P^{\sum\limits_{i=1}^{N}\mu_i+m}{\left(\frac{\mu\kappa}{\mu\kappa+m} \right)^{P+1}} { \left(\frac{Tx}{\theta+Tx}\right)^{P+\mu} } 
\end{equation*}
Recall that our goal is to show that as $P$ increases the multiplication of  first four terms, i.e.,  $e_{P,4}$, decreases. Therefore, differentiating $e_{P,4}$ with  respect to $P$, one obtains
\begin{equation}
\textstyle
e'_{P,4}= K_3 b^{P+1}c^{\mu+P}P^a(\frac{a}{p}+\log{b}+\log{c}) \label{diff1}
\end{equation}
where $a=\sum\limits_{i=1}^{N}\mu_i+m$, $b=\frac{\mu\kappa}{\mu\kappa+m}<1$, $c=\frac{Tx}{\theta+Tx}<1$. It is clear from \eqref{diff1} that at sufficiently higher value of $P$, the $e'_{P,4}<0$. as $b$ and $c$ are less than $1$. Hence the $e_{P,4}$ decreases as $P$ increases. It is also apparent that as $b$ and $c$ are higher, i.e., more close to $1$, more number of terms are required to truncate. In other words,  as $\mu$ increases or $\kappa$ increases, more number of terms are required to truncate,  since  $\frac{\mu\kappa}{\mu\kappa+m}$ also increases.  A similar analysis can be done for the case $(i)$ for which the upper bound is given in \eqref{final_i}.  
\vspace{-0.1in}
\section{}
\vspace{-0.1in}
\label{rate}
In order to derive the rate expression, first we need to derive the outage probability expression when SoI experience $\kappa$-$\mu$ shadowed fading  with integer $\mu$. Using the series expression of $F_D^{(N)}(.)$ given by
\begin{equation} 
F_D^{(N)}[a,b_1,\cdots, b_N; c;x_1,\cdots, x_N]=\sum\limits_{i_1\cdots i_N=0}^{\infty}\frac{(a)_{i_1+\cdots+i_N}(b_1)_{i_1}\cdots(b_N)_{i_N}}{(c)_{i_1+\cdots+i_N}}\frac{x_1^{i_1}}{i_1!}\cdots\frac{x_N^{i_N}}{i_N!},\label{eq:lauricella1}
\end{equation}
the expression given in \eqref{kappa_mu5} can be rewritten as
\begin{equation}
\scriptstyle
O_{p,P}=K_2\sum\limits_{p=0}^{P} \frac{(m)_p\Big(1-\frac{\theta}{\lambda}\Big)^p\Gamma(\sum\limits_{i=1}^{N}\mu_i+p+\mu)}{(\mu)_p p!}\sum\limits_{i_1\cdots i_{2N}=0}^{\infty}\frac{(1-p-\mu)_{i_1+\cdots+i_{2N}}(\mu_1-m_1)_{i_1}\cdots(m_N)_{i_{2N}}}{\Big(1+\sum\limits_{i=1}^{N}\mu_i\Big)_{i_1+\cdots+i_{2N}}}\frac{\left(\frac{\theta}{\theta+T\theta_1}\right)^{i_1}}{i_1!}\cdots\frac{\left(\frac{\theta}{\theta+T\lambda_N}\right)^{i_{2N}}}{i_{2N}!}.
\end{equation} 
When $\mu$ for the SoI fading is restricted to integer, $1-p-\mu$ will be integer and  above series expression can  be simplified using  the following property of Pochhammer symbol \cite[P. 14]{srivastava1985multiple}
\begin{equation}
(-n)_k= \left\{
\begin{array}{rl}
&\frac{(-1)^k n!}{(n-k)!}, 0\leq k\leq n,\\
&0,  k>n,
\end{array} \right.\label{pochhamer}
\end{equation}
\begin{equation*}
\textstyle \text{ and hence
 }
O_{p,P}=K_2\sum\limits_{p=0}^{P} \frac{(m)_p\Big(1-\frac{\theta}{\lambda}\Big)^p\Gamma(\sum\limits_{i=1}^{N}\mu_i+p+\mu)}{(\mu)_p p!} \times
\end{equation*} 
\begin{equation}
\textstyle
\sum\limits_{i_1\cdots i_{2N}=0}^{p+\mu-1}\frac{(1-p-\mu)_{i_1+\cdots+i_{2N}}(\mu_1-m_1)_{i_1}\cdots(m_N)_{i_{2N}}}{\Big(1+\sum\limits_{i=1}^{N}\mu_i\Big)_{i_1+\cdots+i_{2N}}}
\frac{\left(\frac{\theta}{\theta+T\theta_1}\right)^{i_1}}{i_1!}\cdots\frac{\left(\frac{\theta}{\theta+T\lambda_N}\right)^{i_{2N}}}{i_{2N}!}\label{integer}.
\end{equation}
Now, the rate is given in \cite{tse2005fundamentals},
$R_{\kappa} = \mathbb{E}[\text{ln}(1+\text{SIR})]$. For a positive random variable (RV) $\mathbb{E}[X] = \int_{t>0} \mathbb{P} [ {\ln}(1+\text{SIR})>t ] \text{d}t$. Since logarithm is monotonically increasing in SIR, we have $R_{\kappa} =  \int_{0}^\infty  \mathbb{P} [ \text{SIR} > e^t-1 ] \text{d}t$. With the help of \eqref{integer}, the rate is given by
\begin{equation*}
\textstyle
R_{\kappa}=\frac{1}{\Gamma\left(1+\sum\limits_{i=1}^{N}\mu_i\right)}\frac{\theta^ m }{\Gamma(\mu)(\lambda)^m}\int\limits_{t=0}^{\infty}\sum\limits_{p=0}^{P} \frac{(m)_p\Big(1-\frac{\theta}{\lambda}\Big)^p\Gamma(\sum\limits_{i=1}^{N}\mu_i+p+\mu)}{(\mu)_p p!} \times
\end{equation*}
\begin{equation}
\textstyle
\sum\limits_{i_1\cdots i_{2N}=0}^{p+\mu-1}\frac{(1-p-\mu)_{i_1+\cdots+i_{2N}}(\mu_1-m_1)_{i_1}\cdots(m_N)_{i_{2N}}}{\Big(1+\sum\limits_{i=1}^{N}\mu_i\Big)_{i_1+\cdots+i_{2N}}}
\frac{\left(\frac{\theta}{\theta+(e^t-1)\theta_1}\right)^{i_1+\mu_1-m_1}}{i_1!}\cdots\frac{\left(\frac{\theta}{\theta+(e^t-1)\lambda_N}\right)^{i_{2N}+m_{2N}}}{i_{2N}!}\text{d}t\label{capacity}.
\end{equation}
Rewriting the above expression one obtains, 
\begin{equation*}
\textstyle
R_{\kappa}=\frac{1}{\Gamma\left(1+\sum\limits_{i=1}^{N}\mu_i\right)}\frac{\theta^ m }{\Gamma(\mu)(\lambda)^m}\sum\limits_{p=0}^{P} \frac{(m)_p\Big(1-\frac{\theta}{\lambda}\Big)^p\Gamma(\sum\limits_{i=1}^{N}\mu_i+p+\mu)}{(\mu)_p p!} \times
\end{equation*} 
\begin{equation}
\textstyle
\sum\limits_{i_1\cdots i_{2N}=0}^{p+\mu-1}\frac{(1-p-\mu)_{i_1+\cdots+i_{2N}}(\mu_1-m_1)_{i_1}\cdots(m_N)_{i_{2N}}}{\Big(1+\sum\limits_{i=1}^{N}\mu_i\Big)_{i_1+\cdots+i_{2N}}i_1!\cdots i_{2N}!}
\int\limits_{t=0}^{\infty}\left(\frac{\theta}{\theta+(e^t-1)\theta_1}\right)^{i_1+\mu_1-m_1}\cdots\left(\frac{\theta}{\theta+(e^t-1)\lambda_N}\right)^{i_{2N}+m_{2N}}\text{d}t\label{capacity1}.
\end{equation}
To simplify it further we need to evaluate $\eta_t=\int\limits_{t=0}^{\infty}\prod\limits_{i=1}^{2N}\left(\frac{1}{1+(e^t-1)\theta_i}\right)^{\eta_i}\text{d}t$. $\eta_t$ can be simplified by following the method given in \cite[Appendix C]{7130668}. The details of the simplification are also given here for convenience (i.e. \eqref{rate2} to \eqref{eta_t}).
Using transformation of variable $e^{-t}=z$, one obtains
\begin{equation}
\textstyle
\eta_t=\int\limits_{0}^{1}\prod\limits_{i=1}^{2N}\left(\frac{1}{1+\theta_i (z^{-1}-1)}\right)^{\eta_i}\frac{1}{z}\text{d}z\label{rate2}.
\end{equation}
Rearranging the integrand of \eqref{rate2}, one can rewrite \eqref{rate2} as,
\begin{equation}
\textstyle
\eta_t=\prod\limits_{i=1}^{2N}(\theta_i)^{-\eta_i}\int\limits_{0}^{1}z^{\sum\limits_{i=1}^{2N}\eta_i-1} \prod\limits_{i=1}^{2N}\left(1-\frac{\theta_i-1}{\theta_i}z\right)^{-\eta_i}\text{d}z\label{rate3}.
\end{equation}
In order to simplify \eqref{rate3}, we use the following integral expression
\begin{equation}
\textstyle
\int\limits_{0}^{1}u^{a-1}(1-u)^{c-a-1}(1-ux_1)^{-b_1}\cdots (1-ux_N)^{-b_N}\text{d}u=\frac{\Gamma(a)\Gamma(c-a)}{\Gamma(c)}F_D^{(N)}\left[a,b_1, \cdots, b_N;c;x_1,\cdots, x_N\right] \label{expression}
\end{equation}
$\text{ for }\mathfrak{R}(a)>0, \mathfrak{R}(c-a)>0,$  where $\mathfrak{R}(a)$ is real part of $a$. Comparing \eqref{rate3} and \eqref{expression}, it is apparent that  \eqref{rate3} is equivalent to \eqref{expression} with $b_i=\eta_i \text{ }\forall i$, $a={\sum\limits_{i=1}^{2N}\eta_i}$, $c={\sum\limits_{i=1}^{2N}\eta_i+1}$ and $x_i=\frac{\theta_i-1}{\theta_i}\text{ } \forall i$. Hence, $\eta_t$ can be expressed as  
\begin{equation}
\textstyle
\eta_t=\prod\limits_{i=1}^{N}(\theta_i)^{-\eta_i} \frac{\Gamma(\sum\limits_{i=1}^{2N}\eta_i)\Gamma(1)}{\Gamma(\sum\limits_{i=1}^{2N}\eta_i+1)} F_D^{(2N)}\left[\sum\limits_{i=1}^{2N}\eta_i,\eta_1, \cdots, \eta_{2N};\sum\limits_{i=1}^{2N}\eta_i+1;\frac{\theta_1-1}{\theta_1},\cdots, \frac{\theta_N-1}{\theta_N}\right].
\end{equation}
\begin{equation}
\textstyle \eta_t=\left(\frac{\prod\limits_{i=1}^{N}(\theta_i)^{-\eta_i}}{\sum\limits_{i=1}^{2N}\eta_i}\right) 
F_D^{(2N)}\left[\sum\limits_{i=1}^{2N}\eta_i,\eta_1, \cdots, \eta_{2N};\sum\limits_{i=1}^{2N}\eta_i+1;\frac{\theta_1-1}{\theta_1},\cdots, \frac{\theta_N-1}{\theta_N}\right]. \label{eta_t}
\end{equation}
Using the transformation of the Lauricella's function of the fourth kind \cite[p. 286]{exton1976multiple}.
\begin{equation*}
\textstyle
F_D^{(N)}[a,b_1,\cdots, b_N; c ;x_1,\cdots, x_N]=\left[ \prod\limits_{i=1}^{N}(1-x_i)^{-b_i} \right]F_D^{(N)}\left(c-a,b_1,\cdots, b_N; c ;\frac{x_1}{x_1-1},\cdots, \frac{x_N}{x_N-1}\right) 
\end{equation*}
 $\eta_t$ can be further simplified as
\begin{equation}
\textstyle 
\eta_t=\left(\sum\limits_{i=1}^{2N}\eta_i\right)^{-1} 
F_D^{(2N)}\left[1,\eta_1, \cdots, \eta_{2N};\sum\limits_{i=1}^{2N}\eta_i+1;1-\theta_1,\cdots, 1-\theta_N\right]\label{rate4}.
\end{equation}
Using above expression and \eqref{capacity1}, the rate expression is given by
\begin{equation*}
\textstyle
R_{\kappa}=\frac{1}{\Gamma\left(1+\sum\limits_{i=1}^{N}\mu_i\right)}\frac{ \theta^ m }{\Gamma(\mu)(\lambda)^m}\sum\limits_{p=0}^{P} \frac{(m)_p\Big(1-\frac{\theta}{\lambda}\Big)^p\Gamma(\sum\limits_{i=1}^{N}\mu_i+p+\mu)}{(\mu)_p p!} \sum\limits_{i_1\cdots i_{2N}=0}^{p+\mu-1}\frac{(1-p-\mu)_{i_1+\cdots+i_{2N}}(\mu_1-m_1)_{i_1}\cdots(m_N)_{i_{2N}}}{\Big(1+\sum\limits_{i=1}^{N}\mu_i\Big)_{i_1+\cdots+i_{2N}}i_1!\cdots i_{2N}!} \times
\end{equation*} 
\begin{equation}
\textstyle
\left(\sum\limits_{j=1}^{2N}i_j+\sum\limits_{j=1}^{N}\mu_j\right)^{-1} F_D^{(2N)}\bigg[1,\mu_1-m_1, \cdots, m_{N};
\sum\limits_{i=1}^{N}\mu_i+1; 1-\frac{\theta_1}{\theta},\cdots, 1-\frac{\lambda_N}{\theta}\bigg].\label{capacity2}
\end{equation}

\end{appendices}
\vspace{-0.2in}
\bibliographystyle{IEEEtran}
\bibliography{bibfile}
\end{document}